\documentclass{article}


\usepackage[preprint]{neurips_2025}




\usepackage[utf8]{inputenc} 
\usepackage[T1]{fontenc}    
\usepackage{hyperref}       
\usepackage{url}            
\usepackage{booktabs}       
\usepackage{amsfonts}       
\usepackage{nicefrac}       
\usepackage{microtype}      
\usepackage{xcolor}         
\usepackage{wrapfig}

\usepackage{microtype}
\usepackage{graphicx}
\usepackage{subfigure}
\usepackage{booktabs} 

\usepackage{hyperref}

\usepackage[utf8]{inputenc} 
\usepackage[T1]{fontenc}    
\usepackage{hyperref}       
\usepackage{url}            
\usepackage{booktabs}       
\usepackage{amsfonts}       
\usepackage{nicefrac}       
\usepackage{microtype}      
\usepackage{xcolor}     
\usepackage{amsmath}
\usepackage{mathtools}
\usepackage{amsthm}




\usepackage{amsmath}
\usepackage{amssymb}
\usepackage{mathtools}
\usepackage{amsthm}


\theoremstyle{plain}

\newtheorem{Theorem}{Theorem}
\newtheorem{Definition}{Definition}
\newtheorem{Corollary}{Corollary}

\newcommand{\nop}[1]{{}}

\usepackage[textsize=tiny]{todonotes}






 \title{Considering the Difference in Utility Functions of Team Players in Adversarial Team Games}

%

\author{%
  Youzhi Zhang  \\
  Centre for Artificial Intelligence and Robotics\\
Hong Kong Institute of Science \& Innovation\\ Chinese Academy of Sciences, Hong Kong SAR\\
  \texttt{youzhi.zhang@cair-cas.org.hk} \\
}

\begin{document}

\maketitle


\begin{abstract}
The United Nations' 2030 Agenda for Sustainable Development requires that all countries collaborate to fight adversarial factors to achieve peace and prosperity for humans and the planet. This scenario can be formulated as an adversarial team game in AI literature, where a team of players play against an adversary. However, previous solution concepts for this game assume that team players have the same utility functions, which cannot cover the real-world case that countries do not always have the same utility function. This   paper argues that studying adversarial team games should not ignore the difference in utility functions of team players. We show that ignoring the difference in utility functions of team players could cause the computed equilibrium to be unstable.
To show the benefit of considering the  difference in utility functions of team players,   we   introduce a novel solution concept called Co-opetition Equilibrium (CoE) for the adversarial team game. In this game,  team players with different utility functions (i.e., cooperation between team players) correlate their actions to play against the adversary (i.e., competition between the team and the adversary). We further introduce the team-maximizing CoE, which is a CoE but maximizes the team's utility among all CoEs. Both equilibria can overcome the issue caused by ignoring the difference in utility functions of team players.  We further show the opportunities for theoretical and algorithmic contributions based on our position of  considering the  difference in utility functions of team players.
\end{abstract}


\section{Introduction}
In 2015, the United Nations Member States adopted the 2030 Agenda for Sustainable Development \cite{UN2023}, which serves as a plan for peace and prosperity for humans and the planet. This agenda is centered around the 17 Sustainable Development Goals, which call for action from all countries together to end poverty and other deprivations, improve health and education, reduce inequality, promote economic growth, tackle climate change, and protect our oceans and forests. 
According to this plan,   all countries   should collaborate to fight adversarial factors to achieve peace and prosperity for humans and the planet. 

In   AI literature, this scenario can be formulated as an adversarial team game \cite{von1997team,basilico2017team,celli2018computational,zhang2020computing,farina2021connecting,zhang2023team,liu2024computing,zhang2024dag}.
In an adversarial team game, a team of players play against an adversary (or a team of opponents). Previous works \cite{von1997team,farina2018ex,zhang2020converging,carminati2022marriage,zhang2022correlation,zhang2022teamtree,zhang2022subgame,mcaleer2023teampsro,liu2024leveraging} have  assumptions for an adversarial team game   that team players have the same utility function and the game is zero-sum. 
Under these assumptions, there are different solution concepts for different cases. When team players play independently against the adversary, the corresponding solution concepts are: 1) Nash Equilibrium (NE) \cite{nash1951non,Anagnostides2023Algorithms}, where there are no players obtaining a higher utility by  deviating from the equilibrium, and 2) team-maxmin equilibrium \cite{von1997team,basilico2017team,celli2018computational,zhang2020converging}, which is an NE but maximizes  the team's utility among all NEs.
When team players can correlate their strategies 
\cite{basilico2017team} in normal-form games, the solution concept is correlated team-maxmin equilibrium, which   actually is equivalent to   NE in zero-sum two-player games, because the team with correlated strategies and the same utility function for team players can be treated as a single player. In extensive-form games, there are two solution concepts for the team with correlated strategies \cite{celli2018computational,farina2018ex,zhang2021computing,zhang2022optimal,zhang2023team}: 1)   correlated team-maxmin equilibrium  with  communication device for the case where the team can always communicate and correlate their strategies, and  then the game is equivalent to a two-player zero-sum game with perfect recall; and 2) correlated team-maxmin equilibrium  with coordination device for the case where the team can only communicate and correlate their strategy before the game play, and  then the game is equivalent to a two-player zero-sum game with imperfect recall. 

However, the above solution concepts cannot work for the cases where team players have different utility functions. These cases are common in the real world. For example, even though the 2030 Agenda for Sustainable Development requires that all countries collaborate to fight against adversarial factors for peace and prosperity for humans and the planet, the utility functions of countries are not always the same. For instance,   in the 2022 United Nations Convention on Biological Diversity Conference of Parties,  representatives from various countries aimed to establish a set of objectives to be achieved by 2030 in order to prevent and reverse the degradation of the natural environment. The final deal, i.e., the Kunming-Montreal Global Biodiversity Framework, was accepted by most countries but was not accepted by the Democratic Republic of Congo because their funding requirement was not satisfied in this deal \cite{Guardian2022}. Another   example happened in the   2023 United Nations Climate Change Conference, where countries aimed to set up policies to limit global temperature rises.   Many countries have different opinions about phasing out fossil fuels: most countries, including small island countries, want to phase out fossil fuels, but oil-producing countries do not agree \cite{CNN2023}.  Therefore, we require new solution concepts for the case where team players have different utility functions.

Therefore, this paper argues that studying adversarial team games should not ignore the difference in utility functions of team players to provide a better solution for real-world scenarios like the United Nations' 2030 Agenda for Sustainable Development.
We show
that ignoring the difference in utility functions
of team players could cause the computed equilibrium to be unstable in the original game. 
To show the advantage of considering the  difference in utility functions of team players,  
we   introduce a novel  solution concept called Co-opetition   Equilibrium (CoE) for the adversarial team game. In this game, team players with different utility functions (i.e., cooperation between team players) correlate their actions to play against the adversary (i.e.,  competition between the team and the adversary). We do not assume that the game is zero-sum. In a CoE, the strategies of the team and the adversary form an NE, and the team players' strategies form a correlated equilibrium when the adversary strategy is given. 
For an adversarial team game, we are interested in how to maximize the team’s utility as well. So, we introduce the Team-Maximizing CoE (TMCoE), which is a CoE but maximizes the team’s utility among all CoEs. Both equilibria can overcome the issue caused by ignoring the difference in utility functions  of team players. 

We  show the opportunities for theoretical and algorithmic contributions by considering the  difference in utility functions of team players.
We show that each two-player game can be reduced to a three-player zero-sum adversarial team game, and we can reconstruct a CoE in this adversarial team game from an NE in this two-player game and vice versa. We then have a surprising result that the problem of finding a (or team-maximizing) CoE is PPAD-complete (or NP-hard), the same hardness for computing a (or optimal) NE, even though team players correlate their actions in CoE instead of playing independently in NE. 
Even though computing a TMCoE is hard in general adversarial team games, we identify some games where computing this equilibrium   is efficient. These games are zero-sum adversarial team games in which team players’ utility functions do not have to be identical but are consistent with the team’s whole utility function. In these multiplayer games, 
 unlike NE, we show that TMCoEs are exchangeable, and there is a polynomial-time algorithm to compute a TMCoE. We then show   opportunities based on our position of   considering the  difference in utility functions of team players. 


\section{Preliminaries}
A normal-form game     $G$   \cite{shoham2008multiagent} includes a set $N=\{1,\dots,n\}$ of  players, a finite set $A_i$ of actions for each player $i\in N$, and  a utility function  $u_i:\mathbf{A}\rightarrow \mathbb{R}$ for each player $i\in N$. Here,  the set  of joint actions for all players is $\mathbf{A}=\times_{i\in N}A_i$ with $a_i\in A_i$ and $\mathbf{a}\in \mathbf{A}$.
The set of player $i$'s mixed strategies   is $X_i =\Delta(A_i)$, i.e., each mixed strategy $x_i\in X_i$ is a probability distribution over $A_i$. Moreover, $\mathbf{X}=\times_{i\in N}X_i$ is the  set of mixed strategy profiles for all players.
 Player $i$'s expected utility   for each $\mathbf{x}\in \mathbf{X}  $ is $u_i(\mathbf{x})=\sum_{\mathbf{a}\in \mathbf{A}}u_i(\mathbf{a})\prod_{j\in N, a_j\in \mathbf{a} }x_j(a_j)$, where $a_j\in \mathbf{a}$ means that $a_j $ is in   vector $\mathbf{a}$. We use $-i=N\setminus\{i\}$ to represent the set of all players without including player $i$. We define that $\mathbf{A}_{-i}=\times_{j\in -i}A_j$, $\mathbf{X}_{-i}=\times_{j\in -i}X_j$, and 
   $u_i(a_i,\mathbf{x}_{-i})=\sum_{\mathbf{a}_{-i}\in \mathbf{A}_{-i}}u_i(a_i,\mathbf{a}_{-i})\prod_{j\in-i,a_j\in\mathbf{a}_{-i}}x_j(a_j)$ for each $\mathbf{x}_{-i}\in \mathbf{X}_{-i}$ and $a_i\in A_i$. $\mathbf{A}_{N'}=\times_{j\in N'}A_j$ for $N'\subseteq N$.

A strategy profile  is a Nash Equilibrium (NE) if   no player  can obtain a higher expected utility if she changes her strategy. That is, $\mathbf{x}=(x_i,\mathbf{x}_{-i})\in\mathbf{X}$ is an NE if and only if $u_i(x_i,\mathbf{x}_{-i})\geq u_i(x'_i,\mathbf{x}_{-i})$  for each $i\in N$ and $x'_i\in X_i$.



A correlated strategy $x_{N'}\in X_{N'}=\Delta(\mathbf{A}_{N'})$ is a probability distribution over the joint action space   of players in $N'$. We denote $x_{N}=x$.
$x\in X = \Delta(\mathbf{A})$ is  a Correlated Equilibrium (CE)   \cite{aumann1974subjectivity} 
if   a mediator draws a joint action $\mathbf{a}\in \mathbf{A}$ according to $x$ before the play and then privately recommends each   action $a_i\in \mathbf{a}$   to each corresponding player $i$, and  no player  can obtain a higher expected utility if she does not follow the recommended action. Formally, $x$ 
is a CE if and only if, for every $i\in N$ and $a_i,a'_i\in A_i$, these linear constraints hold:
$\sum_{\mathbf{a}_{-i}\in \mathbf{A}_{-i}}x(a_i,\mathbf{a}_{-i})(u_i(a_i,\mathbf{a}_{-i})-u_i(a'_i,\mathbf{a}_{-i}))\geq 0.$

In this paper, we consider       adversarial team games \cite{von1997team,basilico2017team,zhang2020converging}, where a team of players coordinate to play against an adversary. 
We denote that the team $T=N\setminus\{n\}$ and player $n$ is the adversary. $u_T(\mathbf{x})=\sum_{i\in T}u_i(\mathbf{x})$ is the expected utility of the team under the strategy profile $\mathbf{x} = (\mathbf{x}_T,x_n)$ with $\mathbf{x}_T\in \mathbf{X}_T$ and $x_n\in X_n$. In zero-sum games with the same utility function for team players, $\mathbf{x}^\star = (\mathbf{x}_T,x_n)$  is a Team Maxmin Equilibrium (TME) if $\mathbf{x}^\star$ is an NE and $u_T(\mathbf{x}^\star)$ is the maximum among all NEs. In these games, a Correlated Team Maxmin Equilibrium (CTME) is equivalent to  a two-player NE because team players can correlate, i.e., the team
can be treated as a single player due to the same utility function of team players. 

\section{The Cost of Ignoring the Difference in Utilities of Team Players}\label{section:cost}
 In general games, players may have different utility functions, and NE is an important solution concept when players take actions independently. Here, we focus on  adversarial team games in which team players may have different utility functions, and NE is not a  good solution concept when team players can cooperate, as we will show in Section \ref{sec_coopetition}.  
In the theory of teams, Marschak’s seminal work \cite{marschak1955elements} argues that the differences in utility functions of team players should be ignored to make the problem easier. 
By following the traditional definition of the team theory,   previous works on adversarial team games \cite{von1997team,basilico2017team,celli2018computational,zhang2020computing,farina2021connecting,zhang2023team}   assume that team players   share a common utility. Thus, in previous adversarial team games, the shared-utility assumption is standard, and this simplification is foundational to their computational methods. To use their solution concepts or algorithms to handle the problem where team players have different utility functions, we have to ignore the differences in utility functions of team players. However, we argue that we should not do that, i.e., we should not limit our research to the assumption made by previous work on adversarial team games, because this assumption could cause the computed equilibrium to be unstable in the original game and thus
cause a high cost for the team.







 \begin{table}[h]
    \begin{minipage}[b]{0.42\textwidth}\centering
        \begin{tabular}{c|cc}
       $G_a$  &D&C  \\\hline
      D   &0,0&7,2 \\
      C& 2,7&6,6
    \end{tabular}
    \caption{\small A game of chicken}
      \label{tab:chickengame}
    \end{minipage}
    \begin{minipage}[b]{0.43\textwidth}\centering
        \begin{tabular}{c|cc}
       $G_b$  &D&C  \\\hline
      D   &0,0&4.5,4.5 \\
      C& 4.5,4.5&6,6
    \end{tabular}
    \caption{\small Ignore the difference}
      \label{tab:chickengameignore}
    \end{minipage}
    
 \end{table}
We use an  example extended from the famous game of chicken \cite{wiki2025ce} to illustrate the cost of ignoring the difference in utilities of team players.
In this game, the team has two players, and the adversary has two actions $A$ and $B$ ($A$ is strictly dominated by $B$). When the adversary plays B, the team’s payoff matrix is: D is ‘Dare’, C is ‘Chicken out’, as shown in Table \ref{tab:chickengame},  which corresponds to the original game of chicken. We use $G_a$ and $G_b$ to represent   adversarial team games with the team's payoff matrices shown in  Table \ref{tab:chickengame} and   Table \ref{tab:chickengameignore}, respectively, given the adversary's action $B$. They could be  treated as  simplified versions of real-world games. 

\nop{
\begin{table}
\centering\small
\makebox[0pt][c]{\parbox{0.5\textwidth}{%
\small
    \begin{minipage}[b]{0.25\textwidth}\centering
        \begin{tabular}{c|cc}
       $G_a$  &D&C  \\\hline
      D   &0,0&7,2 \\
      C& 2,7&6,6
    \end{tabular}
    \caption{\small A game of chicken}
      \label{tab:chickengame}
    \end{minipage}
    \begin{minipage}[b]{0.25\textwidth}\centering
        \begin{tabular}{c|cc}
       $G_b$  &D&C  \\\hline
      D   &0,0&4.5,4.5 \\
      C& 4.5,4.5&6,6
    \end{tabular}
    \caption{\small Ignore the difference}
      \label{tab:chickengameignore}
    \end{minipage}
    
}}
\end{table}
}
\nop{
Based on the analysis for the game of chicken \footnote{\url{https://en.wikipedia.org/wiki/Correlated_equilibrium}}, 
we know that:

\begin{itemize}
    \item Profiles (D, C, B) and (C, D, B), ((1/3 with D, 2/3 with C), (1/3 with D, 2/3 with C), B) are three NEs and CoEs as well. The utilities for the team among these equilibria are 9, 9, 84/9, respectively.
    \item ((1/3 with (C, C), 1/3 with (D, C), 1/3 with (C, D), 0 with (D, D)), B) is a CoE with the expected utility of 10 for the team.
    \item ((1/2 with (C, C), 1/4 with (D, C), 1/4 with (C, D), 0 with (D,D)), B) is a co-opetition equilibrium and a team-maximizing co-opetition equilibrium with the maximum expected utility of 10.5 for the team.
\end{itemize}
}

Based on the analysis for the original game of chicken, profiles $(D, C, B)$ and $(C, D, B)$, $((1/3$ with $D, 2/3$ with $C), (1/3$ with $D, 2/3$ with $C), B)$ are three NEs in $G_a$ because $(D, C)$ and $(C, D)$, $(1/3$ with $D, 2/3$ with $C), (1/3$ with $D, 2/3$ with $C)$ are three NEs in  the original game of chicken. The utilities for the team among these NEs are 9, 9, and 84/9, respectively. In normal-form adversarial team games,
TME and CTME assume that team players have the same utility function. To use both solution concepts, we have to ignore the difference in utilities of team players.
If we do that, i.e.,
  we treat the team players as having the same utility for each outcome in the game $G_a$   shown in Table \ref{tab:chickengame}, we can obtain a new game $G_b$ shown in Table \ref{tab:chickengameignore}. In $G_b$, $(C, C, B)$ is the unique NE and the unique TME because $D$ is strictly dominated by $C$ for each team player in $G_b$. However, $(C, C, B)$ is not an NE in $G_a$ and then $(C, C, B)$ is not stable in $G_a$.   The reason is that: For each team player, given the strategy profile $(C, C, B)$, he will deviate to playing D to obtain a higher utility 7 instead of 6 in $G_a$. Then, the resulting outcome is $(D, D)$ for both team players due to deviation, which gives them utility 0 under the resulting strategy profile $(D, D, B)$ in the original game $G_a$. Similarly, $((C, C), B)$ is a CTME in $G_b$, but $((C, C), B)$ is not stable in $G_a$. $((C, C), B)$ is a CTME in $G_b$ because $(C, C)$ dominates other strategies in $G_b$. However,   for each team player, in $G_a$, if he is recommended to play $C$, he will deviate to playing $D$ to obtain a higher utility 7 instead of 6, supposing the other player played their recommended strategy. Then, the resulting outcome is $(D, D)$ for both team players due to deviation, which gives them utility 0 under the resulting strategy profile $((D, D), B)$ in the original game $G_a$. We can see that   ignoring the difference in utilities of team
players could cause the computed equilibrium to be unstable in the original game and then cause a high cost for the team.


  

\begin{table*}[!tp]
    \centering  
   \small     \begin{tabular}{l|cccccccc}
     &$G_a$	 &$G_a$ 	&$G_a$  &$G_b$	&$G_b$  &	$G_b$   	&$G_a$ &$G_a$  \\\hline
Strategy Profile &	 $u_T$	&Stable 	&NE  &NE	&TME  &	CTME   	&CoE &TMCoE  \\\hline
1	(C, C, B)	for three players&0  	&no	&&yes	&		yes&&&\\\hline
2	((C, C), B)	for team $T$ and player $n$&0  	&no	&	&&&		yes&&\\\hline
3	(D, C, B) &	9	&yes	&yes	&&&	& &	\\\hline
4	((D, C), B)  &	9	&yes	&&&&	&	yes&	\\\hline
5	(C, D, B)	&9	&yes&	yes	&&	&&	&\\\hline
6	((C, D), B)&	9	&yes	&&&&	&yes&		\\\hline
7	 ((1/3, 2/3), (1/3, 2/3), B)	&$\frac{84}{9}$&yes&	yes&	&&&&\\	\hline	
8   ((4/9, 2/9, 2/9, 1/9, B)	 &	$\frac{84}{9}$	&yes&	&&	&&	yes&	\\\hline
9	((1/3, 1/3, 1/3, 0), B) 	&10	&yes	&&&&&	yes&		\\\hline
10	((1/2, 1/4, 1/4, 0), B) 	&$\frac{21}{2}$&	yes & &	&&&	yes&	yes	\\
\hline
    \end{tabular}
    \caption{ 
\small    $G_a$ and $G_b$   represent   adversarial team
games with the team's payoff matrices shown in  Table \ref{tab:chickengame} and   Table \ref{tab:chickengameignore}, respectively, given the adversary's action $B$.  Strategy profiles 7, 8, 9, and 10 are	 ((1/3 with D, 2/3 with C), (1/3 with D, 2/3 with C), B), ((4/9 w. (C, C), 2/9 w. (D, C), 2/9 w. (C, D), 1/9 w. (D, D)), B), ((1/3 w. (C, C), 1/3 w. (D, C), 1/3 w. (C, D), 0 w. (D, D)), B), and	((1/2 w. (C, C), 1/4 w. (D, C), 1/4 w. (C, D), 0 w. (D,D)), B), respectively.
        Note that CoEs 4, 6, and 8 are induced by NEs 3, 5, and 7, respectively. The TME and CTME in $G_b$ are not stable in $G_a$ (see Section \ref{section:cost}). 
        }
    \label{tab:differentequilibria}
\end{table*}

\section{Co-opetition Equilibrium  }\label{sec_coopetition}
To show the advantage of considering the different utilities of team players, 
  we    introduce a  novel solution concept for    adversarial team games with different utility functions for team players. 

To   play against an adversary in a coordinated manner, if a strategy profile is in a  stable state, the adversary strategy should be a best response against the team's correlated strategy, and team players should not deviate from the recommended  action of the team's correlated strategy. That is, the strategies of the team and the adversary form an NE, and the team players' strategies form a CE (i.e., there is a mediator for team players) when the adversary strategy is given. Therefore, we define this stable state as a Co-opetition Equilibrium (CoE). 

\begin{Definition}For $x_T\in X_T$ and $x_n\in X_n$, 
$(x_T,x_n)$ is a CoE in an adversarial team game if $x_n$ is a best response to $x_T$ (i.e., $u_n(x_T,x_n)\geq u_n(x_T,x'_n)$ for each $x'_n\in X_n$), and $x_T$ is a CE for  the team when  $x_n$ is given, i.e., satisfies the following constraints, $\forall i\in T, a_i,a'_i\in A_i$ (here, $-\{i,n\}=N\setminus\{i,n\}$),
\begin{equation}\label{GCTMEConstraint}\small
      \sum_{\mathbf{a}_{-(i,n)}\in \mathbf{A}_{-(i,n)}} x_T(a_i,\mathbf{a}_{-(i,n)})\left(u_i(a_i,\mathbf{a}_{-(i,n)},x_n) 
  -u_i(a'_i,\mathbf{a}_{-(i,n)},x_n)\right)\geq 0. 
\end{equation}
\end{Definition}

 

\nop{

\begin{table}
    \centering
    \begin{tabular}{|c|c|c|c|c|c|c|c|}
        \cline{1-3}\cline{5-7} $c_1$&$b_1$&$b_2$&&$c_2$&$b_1$&$b_2$  \\\cline{1-3}\cline{5-7}
         $a_1$&3,3,-6&8,2,-10&&$a_1$&3,3,-6&1,1,-2\\\cline{1-3}\cline{5-7}
         $a_2$&2,2,-4&1,1,-2&&$a_2$&5,5,-10&0,0,0\\\cline{1-3}\cline{5-7}
     
    \end{tabular}
    \caption{An adversarial team game: Player 1 with  $A_1=\{a_1,a_2\}$ and player 2 with $A_2=\{b_1,b_2\}$ form a team, player 3 with $A_3=\{c_1,c_2\}$ is the adversary. Ignoring the difference between the utilities of team players results in a two-player Nash  equilibrium    that   is unstable in the original game. 
    }
    \label{tab:team_adv_game}
\end{table}
}
\subsection{Example and Advantage of CoE}
We use   games in Section \ref{section:cost}  to illustrate  CoE: 
\begin{itemize}  
    \item Profiles $((D, C), B)$ and $((C, D), B)$, $(((1/3$ with $D, 2/3$ with $C), (1/3$ with $D, 2/3$ with $C)), B)$ are three CoEs in $G_a$ because $(D, C)$ and $(C, D)$, $(1/3$ with $D, 2/3$ with $C), (1/3$ with $D, 2/3$ with $C)$ are three NEs and CEs in  the original game of chicken (i.e., Table \ref{tab:chickengame}). The utilities for the team among these equilibria are 9, 9, and 84/9, respectively.
    \item Profile (($1/3$ with $(C, C), 1/3$ with $(D, C), 1/3$ with $(C, D), 0$ with $(D, D)), B)$ is a CoE  in $G_a$ with the expected utility of 10 for the team because ($1/3$ with $(C, C), 1/3$ with $(D, C), 1/3$ with $(C, D), 0$ with $(D, D))$ is a CE in  the original game of chicken ((i.e., Table \ref{tab:chickengame})).
    \item Profile (($1/2$ with $(C, C), 1/4$ with $(D, C), 1/4$ with $(C, D), 0$ with $(D,D)), B)$ is a CoE in $G_a$ and  with the maximum expected utility of 10.5 for the team because ($1/2$ with $(C, C), 1/4$ with $(D, C), 1/4$ with $(C, D), 0$ with $(D,D))$ is a CE in the original game of chicken (i.e., Table \ref{tab:chickengame}).
\end{itemize}
Our CoE considers the different utilities of team players, where the team is not treated as a single player. Compared to the team's utility 0 in $G_a$ resulting in TME and CTME, as shown in Section \ref{section:cost}, the  team's utility in our CoE is much higher. These results are summarized in Table \ref{tab:differentequilibria}.  We   can see that the maximum expected utility of 10.5 for the team among CoEs is higher than the maximum expected utility of 84/9 for the team among NEs. These results show the difference between our CoE and other solution concepts and the advantage of our CoE. Note that our CoE is obviously different from CE because the constraints in our CoE are nonlinear, but the constraints in CE are linear, i.e., CoE includes the competition part that CE does not have. 
The following subsection shows an additional example  to illustrate the difference between CE and CoE, i.e., enforcing a CE in the adversarial team game may result in instability and a  high cost for the team.

\subsection{Illustrating the Difference Between CE and CoE}
Due to the complexity of the scenario, we need to simplify the example to illustrate the difference between solution concepts, and then we may not be able to illustrate the difference among all solution concepts by using a single example.
We could use the following additional example of a general-sum adversarial team game to illustrate the benefits of CoE, especially the difference between CE and CoE. Note that, as we mentioned in the above subsection, our CoE is obviously different from CE because the constraints in our CoE are nonlinear, but the constraints in CE are linear, i.e., CoE includes the competition part that CE does not have. That is, CE only includes cooperation and ignores competition, and thus, it is not suitable for modeling the adversarial scenario. The main reason is that recommending an action to an adversary to execute is impractical due to the independent moves of the adversary. The following additional example will show that enforcing a CE in the adversarial team game may result in instability and a high cost for the team.

As an example,  in this game,  which is extended from the game of chicken \cite{wiki2025ce}, there are three players with two actions for each player. Player 1 with actions $a_1$ and $a_2$ and Player 2 with actions $b_1$ and $b_2$ play against Player 3 (the adversary) with $c_1$ and $c_2$. As shown in Table \ref{tab:modifiedgameofchicken}, $(0,7,2), (0,6,6),$ and $ (0,2,7)$ are   three players’ utilities under action profiles ($a_1$, $b_1$, $c_2$), ($a_1$, $b_2$, $c_2$), and ($a_1$, $b_2$, $c_1$), respectively. The utilities for each player under other action profiles are 0.
We can compute that there are CoE (($a_1$, $b_1$), $c_2$) with the utility 7 for the team, CoE (($a_1$, $b_2$), $c_1$) with the utility 2 for the team,   CoE ((1/3 with ($a_1$, $b_1$), 2/3 with ($a_1$, $b_2$)), (1/3 with $c_1$, 2/3 with $c_2$)) with the utility 42/9 for the team, and then the CoE   that maximizes the team's utility among all CoEs  is ($a_1$, $b_1$), $c_2$) with the utility 7 for the team. These CoEs correspond to the Nash equilibria in the original game of Chicken.

The strategy profile (1/2 with ($a_1$, $b_2$, $c_2$), 1/4 with ($a_1$, $b_1$, $c_2$), 1/4 with ($a_1$, $b_2$, $c_1$)) is a CE of the game. Recommending an action to an adversary to execute is impractical. So enforcing this CE results in the adversary’s observation of the team’s strategy: 3/4 with ($a_1$, $b_2$) and 1/4 with ($a_1$, $b_1$). Then, the adversary’s best response is $c_1$, resulting in 3/2 of the team’s utility,  because the adversary obtains 21/4 with $c_1$ and 20/4 with $c_2$. Thus, enforcing a CE in the adversarial team game may result in instability and a high cost for the team. 
\begin{table}[h]
    \centering
\begin{tabular}{c|cc}
       $a_1$  &$c_1$&$c_2$  \\\hline
      $b_1$   &0,0,0&0,7,2 \\
      $b_2$& 0,2,7&0,6,6
    \end{tabular}
    \caption{The utilities of players when the first player's action $a_1$ is given.}
    \label{tab:modifiedgameofchicken} 
\end{table}

\nop{
\subsection{The Importance of Considering the Difference Between the Utility Functions} 
Our CoE considers the different utilities of team players, where the team is not treated as a single player. Another idea is treating the team as a single player by setting $u_T=\sum_{i\in T}u_i$, and then computing a two-player (the team and the adversary) NE. However, this setting may cause that the team gets a loss and the resulting equilibrium is not stable when team players have different utilities. 
For example,
in the  adversarial team game of Table \ref{tab:team_adv_game}, team players' utilities are not always the same (see the first two numbers at each cell of  Table \ref{tab:team_adv_game}).
If we treat the team as a single player   with $u_T=u_1+u_2$ by ignoring the different utilities of team players, then the two-player Nash  equilibrium is $ (x_T,x_3)$ with $x_T(a_1,b_2)=$ $\frac{3}{7}$, $x_T(a_2,b_1)=\frac{4}{7}$,   $x_3(c_1)=\frac{4}{7}$, and $x_3(c_2)=\frac{3}{7}$.\footnote{Solve it as a two-player zero-sum game or see the argument in \cite{von1997team}.}

However, we can see that $b_2$ is strictly dominated by $b_1$, i.e., player 2 will deviate from $b_2$ to $b_1$ if $b_2$ is recommended by the mediator (team), no matter what the strategies of other players. That is:
\begin{equation*} 
\begin{split}
  &  x_T(a_1,b_2)(u_2(a_1,b_2,x_3)-u_2(a_1,b_1,x_3)) \\
  &+x_T(a_2,b_2)(u_2(a_2,b_2,x_3)-u_2(a_2,b_1,x_3))
   \\
  =& x_T(a_1,b_2)x_3(c_1)(2  -  3) + x_T(a_2,b_2)x_3(c_1)(1  -  2)   \\
  &+   x_T(a_1,b_2)x_3(c_2)(1  -  3)  +  x_T(a_2,b_2)x_3(c_2)(0  -  5)\\
  <& 0, 
\end{split}
\end{equation*}
 where the last inequality is due to   $x_3(c_1)+x_3(c_2)=1$. 
If the team plays the above $x_T$ based on  the two-player NE, the team gets a loss. That is, due to the deviation,  player 2 will always play $b_1$, and then the adversary observes that player 1 plays $a_1$ with $\frac{3}{7}$ and plays $a_2$ with $\frac{4}{7}$ under $x_T$. Then, the adversary will best respond by playing $c_1$ with a utility of $\frac{-34}{7}$ (i.e., $\frac{34}{7}$ for the team). However, we can check that $(x'_T,x'_n)$ with $x'_T(a_1,b_1)=1$ and $x'_n(c_1)=1$   is a  CoE, and we can see   $(a_1,b_1,c_1)$ with the utility of $6$ ($>\frac{34}{7} $) for the team. 
 This loss is caused by ignoring the difference between the utilities of team players.
Therefore, computing an equilibrium considering the different utilities of  team players, e.g., CoE, is necessary. 
}

\section{Opportunities in Theoretical Results}
We show the opportunities in theoretical results for considering the differences in utility functions of team players, where we continue using CoE as an example.
\subsection{ Existence and Computational Complexity}
After defining CoE, to   understand it better, we study the existence and computational complexity of CoE.  
  That is, to show the existence of our CoE, we first show    the relation between NE and  CoE because NE always exists in  any finite game \cite{nash1951non}.\footnote{CoE in Definition 1 is a generalization of CE with an additional adversary. However, even though, given any adversary strategy $x_n$, there exists a CE $x_T$
 for the team, $x_n$ may not be the adversary’s best response to $x_T$. Therefore, we cannot simply conclude that CE is known to exist, and then CoE exists.} 
\begin{Theorem}\label{theorem_nash_correlated_nash} For $\mathbf{x}_T\in \mathbf{X}_T$ and $x_n\in X_n$, 
if  $(\mathbf{x}_T,x_n)$ is a  NE in an adversarial team game, then $(x_T,x_n)$ with $x_T(\mathbf{a}_T)=\prod_{i\in T,a_i\in\mathbf{a}_T}x_i(a_i)$ for each $\mathbf{a}_T\in \mathbf{A}_T$ is a CoE.
\end{Theorem}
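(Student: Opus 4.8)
The plan is to verify directly the two defining conditions of a CoE for the pair $(x_T,x_n)$, where $x_T(\mathbf{a}_T)=\prod_{i\in T,a_i\in\mathbf{a}_T}x_i(a_i)$ is the product distribution induced by the team members' independent equilibrium strategies. The single observation that drives everything is that this product distribution induces exactly the same joint distribution over $\mathbf{A}_T$ as independent play of $\mathbf{x}_T$; consequently, for every player $j$ and every (pure or mixed) adversary strategy, the expected utility computed under the correlated strategy $x_T$ coincides with the one computed under the independent profile $\mathbf{x}_T$. I would record this as a short preliminary remark and then invoke it throughout.

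For the adversary's best-response condition, I would simply apply the NE inequality for player $n$: since $(\mathbf{x}_T,x_n)$ is an NE, $u_n(\mathbf{x}_T,x_n)\ge u_n(\mathbf{x}_T,x'_n)$ for every $x'_n\in X_n$. By the preliminary remark these two quantities equal $u_n(x_T,x_n)$ and $u_n(x_T,x'_n)$ respectively, so $x_n$ is a best response to the correlated strategy $x_T$, which is exactly the first CoE condition.

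The heart of the argument is the team's correlated-equilibrium constraint \eqref{GCTMEConstraint}. Fixing $i\in T$ and $a_i,a'_i\in A_i$, I would substitute the product form $x_T(a_i,\mathbf{a}_{-(i,n)})=x_i(a_i)\prod_{j\in T\setminus\{i\}}x_j(a_j)$ into the left-hand side and factor out $x_i(a_i)$. By independence, the remaining sum equals $u_i(a_i,\mathbf{x}_{T\setminus\{i\}},x_n)-u_i(a'_i,\mathbf{x}_{T\setminus\{i\}},x_n)$, where the other team players play their mixed strategies, so the whole expression becomes $x_i(a_i)\bigl(u_i(a_i,\mathbf{x}_{T\setminus\{i\}},x_n)-u_i(a'_i,\mathbf{x}_{T\setminus\{i\}},x_n)\bigr)$. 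If $x_i(a_i)=0$ the inequality holds trivially; if $x_i(a_i)>0$, then $a_i$ lies in the support of the best-response strategy $x_i$, and the standard support characterization of NE (every action played with positive probability is a pure best response against $\mathbf{x}_{-i}=(\mathbf{x}_{T\setminus\{i\}},x_n)$) gives $u_i(a_i,\mathbf{x}_{T\setminus\{i\}},x_n)\ge u_i(a'_i,\mathbf{x}_{T\setminus\{i\}},x_n)$. In either case the product is nonnegative, so \eqref{GCTMEConstraint} holds for all $i,a_i,a'_i$. Combining the two conditions shows $(x_T,x_n)$ is a CoE.

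I expect the main obstacle to be the bookkeeping in the third step: one must be careful that conditioning on the recommendation $a_i$ reveals nothing about the other team members' actions — this is exactly where the independence of the product distribution $x_T$ is used — so that the conditional deviation incentive of the CE constraint degenerates to the unconditional best-response comparison supplied by the NE. The support characterization itself is routine; articulating cleanly why the conditional (correlated-equilibrium) inequality reduces to the unconditional (Nash) inequality under a product distribution is the conceptual crux. As a closing remark I would note that, since an NE always exists, this theorem immediately yields existence of a CoE, which is the intended payoff flagged in the footnote preceding the statement.
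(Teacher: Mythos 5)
Your proposal is correct and follows essentially the same route as the paper's proof: both verify the adversary's best-response condition directly from the NE inequality for player $n$, and both establish the team constraint \eqref{GCTMEConstraint} by factoring out $x_i(a_i)$ from the product distribution and splitting into the cases $x_i(a_i)=0$ (trivial) and $x_i(a_i)>0$ (support characterization of NE). The only difference is presentational — you factor the sum while the paper multiplies the scalar inequality and expands — so there is nothing substantive to distinguish the two arguments.
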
 
 \begin{proof}
 By the definition of NE, $x_n$ is a best response to $x_T$. For each $i\in T$, $a_i\in A_i$, $x_i\in \mathbf{x}_T$, we consider: 1) If $x_i(a_i)=0$, we have: for each $a'_i\in A_i$,
 \begin{equation*}  
 \begin{split}
     &0=x_i(a_i)u_i(a_i,\mathbf{x}_{-i})\geq  x_i(a_i)u_i(a'_i,\mathbf{x}_{-i})=0\\
   \Rightarrow &x_i(a_i)                        \sum_{\mathbf{a}_{-(i,n)}\in \mathbf{A}_{-(i,n)}}\prod_{j\in -(i,n)}       x_j(a_j)u_i(a_i,\mathbf{a}_{-(i,n)},x_n) \\
   &\geq  x_i(a_i)                        \sum_{\mathbf{a}_{-(i,n)}\in \mathbf{A}_{-(i,n)}}\prod_{j\in -(i,n)}       x_j(a_j)u_i(a'_i,\mathbf{a}_{-(i,n)},x_n)\\
   \Rightarrow & \sum_{\mathbf{a}_{-(i,n)}\in \mathbf{A}_{-(i,n)}}x_T(a_i,\mathbf{a}_{-(i,n)})u_i(a_i,\mathbf{a}_{-(i,n)},x_n) \\
   &\geq \sum_{\mathbf{a}_{-(i,n)}\in \mathbf{A}_{-(i,n)}}x_T(a_i,\mathbf{a}_{-(i,n)})u_i(a'_i,\mathbf{a}_{-(i,n)},x_n).
 \end{split}
 \end{equation*}
2) If  $x_i(a_i)>0$, due to the property of   NE \cite{sandholm2005mixed}, we have:  for each $a'_i\in A_i$,
 \begin{equation*}  
 \begin{split}
 & u_i(\mathbf{x})=u_i(a_i,\mathbf{x}_{-i})\geq   u_i(a'_i,\mathbf{x}_{-i})\\
\Rightarrow     &x_i(a_i)u_i(a_i,\mathbf{x}_{-i})\geq  x_i(a_i)u_i(a'_i,\mathbf{x}_{-i})\\
   \Rightarrow & \sum_{\mathbf{a}_{-(i,n)}\in \mathbf{A}_{-(i,n)}}x_T(a_i,\mathbf{a}_{-(i,n)})u_i(a_i,\mathbf{a}_{-(i,n)},x_n)\\ 
   &\geq \sum_{\mathbf{a}_{-(i,n)}\in \mathbf{A}_{-(i,n)}}x_T(a_i,\mathbf{a}_{-(i,n)})u_i(a'_i,\mathbf{a}_{-(i,n)},x_n).
 \end{split}
 \end{equation*}
 Therefore, given $x_n$, $x_T$ is a team's strategy satisfying  the correlated constraints shown in Eq.(\ref{GCTMEConstraint}), then $(x_T,x_n)$   is a CoE.
 \end{proof}
 
 That is, the above theorem shows that each NE induces a CoE, e.g.,  as shown in Table \ref{tab:differentequilibria},   CoEs 4, 6, and 8 are induced by NEs 3, 5, and 7, respectively.
 Each NE induces a CoE and since NE always exists \cite{nash1951non},     CoE always exists.
\begin{Corollary}
There  always   exists a CoE in any adversarial team game.
\end{Corollary}

Now, we study the computational complexity of CoE. 
Reduction is a common approach for studying the computational complexity of a new problem. Theorem \ref{theorem_nash_correlated_nash} shows that NE is closely related to CoE, and we know that the problem of finding an NE is PPAD-complete in   two-player games  \cite{chen2006settling,daskalakis2009complexity}. Based on the reduction approach, we can reduce each two-player game  to an  adversarial team game.  However, our adversarial team game  has at least three players. To make the reduction, we can introduce a dummy player (having only one action) into a three-player zero-sum adversarial team game. However, it is not straightforward.
If the adversary   in the three-player adversarial team game is the dummy player, the remaining two team players play a CE, which   can be found by linear programming \cite{shoham2008multiagent}. However, this setting ignores the competition between the team and the adversary, and the resulting CE may not be an NE in a two-player game.  If one team player is the dummy player and team players have the same utility function, then the adversarial team game is equivalent to a two-player zero-sum game, where an NE can be computed by linear programming \cite{shoham2008multiagent}. However, this setting ignores the difference between the utility functions of team players and cannot capture the non-zero-sum two-player games.  Therefore, to reduce each two-player game  into an  adversarial team game, we set the  utility of the dummy team player as the negative value of the sum of the other two players’ utilities. Formally, we have the following property.


\begin{Theorem}\label{reductionfromzpto30p}
  For any two-player game $G_2=(\{2,3\},A_2\times A_3,(u_2,u_3))$, there is  a three-player zero-sum adversarial team game $G_T=(\{1,2,3\}$, $A_1\times A_2\times A_3,(u_1,u_2,u_3))$ with $T=\{1,2\}$ and $A_1=\{a_1\}$ such that  $(x_2,x_3)$ 
 is an NE in $G_2$ 
 if and only if  $(x_T,x_3)$ with $x_T(a_1,a_2)= x_2(a_2)$ for each $a_2\in A_2$ is a CoE in $G_T$.
\end{Theorem}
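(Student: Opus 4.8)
The plan is to exploit the fact that the dummy player $1$ has the singleton action set $A_1=\{a_1\}$, so that both the correlated-equilibrium constraints for the team and the adversary's best-response condition in $G_T$ collapse onto the corresponding conditions for the two-player game $G_2$. Since $(x_2,x_3)$ being an NE in $G_2$ is exactly the pair of mutual best-response conditions, the whole statement reduces to matching each half of the CoE definition with one half of the NE definition.

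First I would record the reduction explicitly: set $u_1=-(u_2+u_3)$ so that $G_T$ is zero-sum, and keep $u_2,u_3$ unchanged, noting that because $a_1$ is forced these utilities effectively depend only on $(a_2,a_3)$; in particular $u_3(a_1,a_2,a_3)=u_3(a_2,a_3)$ and the marginal of $x_T$ on $A_2$ is exactly $x_2$. The map $x_T(a_1,a_2)=x_2(a_2)$ is then a bijection between team correlated strategies in $G_T$ and mixed strategies of player $2$ in $G_2$.

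Next I would treat the two halves of the CoE definition separately. For the adversary's best-response condition I would compute $u_3(x_T,x_3)=\sum_{a_2,a_3}x_2(a_2)x_3(a_3)u_3(a_2,a_3)=u_3(x_2,x_3)$, which shows $x_3$ is a best response to $x_T$ in $G_T$ if and only if it is a best response to $x_2$ in $G_2$. For the team's correlated-equilibrium constraints, the constraint for $i=1$ is vacuous (only $a_i=a_i'=a_1$ is possible, so the bracket is identically zero), while for $i=2$ the sum over $\mathbf{a}_{-(2,3)}$ runs over the single profile $a_1$, so Eq.(\ref{GCTMEConstraint}) becomes $x_2(a_2)\bigl(u_2(a_2,x_3)-u_2(a_2',x_3)\bigr)\ge 0$ for all $a_2,a_2'\in A_2$. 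I would then invoke the same support characterization used in the proof of Theorem \ref{theorem_nash_correlated_nash} (that $x_i(a_i)>0$ forces $a_i$ to be a best response) to conclude that this family of inequalities holds precisely when $x_2$ is a best response to $x_3$ in $G_2$.

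Combining the two equivalences yields that $(x_T,x_3)$ is a CoE in $G_T$ if and only if $x_3$ is a best response to $x_2$ and $x_2$ is a best response to $x_3$ in $G_2$, which is the definition of an NE, so both directions of the biconditional follow at once. The step needing the most care is the team constraint for $i=2$: I must verify that restricting the CE constraints to the singleton index set $\mathbf{A}_{-(2,3)}$ genuinely recovers the scalar best-response inequality rather than a weaker averaged condition, and that the fixed adversary strategy $x_n=x_3$ enters the CoE inequality and the two-player comparison in exactly the same way. The dummy utility $u_1$ is a red herring for the equivalence itself, since player $1$ can never deviate; its only purpose is to enforce the zero-sum property required for the downstream complexity argument.
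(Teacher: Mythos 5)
Your proposal is correct and follows essentially the same route as the paper: the identical reduction with a dummy teammate carrying $u_1=-(u_2+u_3)$, the observation that the adversary's best-response condition and the team's CE constraints both collapse onto the two mutual best-response conditions of $G_2$. The only cosmetic difference is that you argue the equivalence of the per-action inequalities with best response via the support characterization (and thereby get both directions of the biconditional at once), whereas the paper sums the inequalities over $a_2$ for the forward direction and appeals to Theorem \ref{theorem_nash_correlated_nash} for the converse.
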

 \begin{proof}
    For any two-player game $G_2=(\{2,3\},A_2\times A_3,(u_2,u_3))$, we construct a three-player zero-sum adversarial team game $G_T=(\{1,2,3\},A_1\times A_2\times A_3,(u_1,u_2,u_3))$ such that the team is $T=\{1,2\}$, the adversary is player $3$, the action space and the utility function for player 2 or player 3   in both $G_2$ and $G_T$  are the same, respectively,   $A_1=\{a_1\}$,  and    $u_1=-(u_2+u_3)$. That is, we add a dummy team player with only one action, whose     utility is the negative value of the sum of other players' utilities:  $\forall a_2\in A_2,a_3\in A_3, u_2(a_1,a_2,a_3)=u_2(a_2,a_3), u_3(a_1,a_2,a_3)=u_3(a_2,a_3),   u_1(a_1,a_2,a_3)=-(u_2(a_1,a_2,a_3)+u_3(a_1,a_2,a_3)).$ 

Now, we show that, $(x_2,x_3)$ is an NE in $G_2$ if and only if $(x_T,x_3)$ with $x_T(a_1,a_2)=x_2(a_2)$ for each $a_2\in A_2$ is a CoE in $G_T$.
If $(x_T,x_3)$ is a CoE in $G_T$, then $x_3$ is a best response to $x_2$ because: for each $x'_3\in X'_3$, $ u_3(x_T,x_3)\geq u_3(x_T,x'_3) $ implies $ u_3(x_2,x_3)\geq u_3(x_2,x'_3).$
If $(x_T,x_3)$ is a CoE in $G_T$, then $x_2$ is a best response to $x_3$ because: $ x_T(a_1,a_2)(u_2(a_1,a_2,x_3)
     -u_2(a_1,a'_2,x_3))\geq 0,   \forall a_2,a'_2\in A_2   $, then we have:
\begin{equation*} 
\begin{split} 
       &x_2( a_2)(u_2(a_2,x_3)-u_2(a'_2,x_3))\geq 0,   \forall  a_2, a'_2  \\
     \Rightarrow & \sum_{a_2\in A_2} x_2( a_2)u_2(a_2,x_3)
  \geq \sum_{a_2\in A_2} x_2( a_2)u_2(a'_2,x_3),  \forall a'_2\in A_2  \\
    \Rightarrow & \sum_{a_2\in A_2} x_2( a_2)u_2(a_2,x_3)
  \geq   u_2(a'_2,x_3),  \forall a'_2\in A_2 \\
   \Rightarrow & \sum_{a'_2\in A_2} x'_2( a'_2)    \sum_{a_2\in A_2} x_2( a_2)u_2(a_2,x_3)\geq          \sum_{a'_2\in A_2} x'_2( a'_2)u_2(a'_2,x_3)   \\
    \Rightarrow &  \sum_{a_2\in A_2} x_2( a_2)u_2(a_2,x_3)\geq     \sum_{a_2\in A_2} x'_2( a_2)u_2(a_2,x_3), \forall x'_2\in X_2.
 \end{split}
 \end{equation*}
 Therefore, $(x_2,x_3)$ 
 is an NE in $G_2$. 
 Similarly, if $(x_2,x_3)$ is an NE in $G_2$, then $(a_1,x_2,x_3)$ is an NE in $G_T$. By Theorem \ref{theorem_nash_correlated_nash},  $(x_T,x_3)$ is a CoE in $G_T$. 
\end{proof}
 

We then have the following result. 
\begin{Corollary}\label{theorem_ppda}
The problem of finding   a CoE is PPAD-complete, even in three-player zero-sum adversarial team games.
\end{Corollary}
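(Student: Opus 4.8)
The plan is to establish PPAD-completeness in two directions—membership in PPAD and PPAD-hardness—using the two theorems already proved as the engine in each direction. For membership I would argue that finding a CoE polynomial-time reduces to finding a Nash equilibrium of a finite game, which is known to lie in PPAD \cite{daskalakis2009complexity}. Given an adversarial team game, Theorem \ref{theorem_nash_correlated_nash} tells us that any NE $(\mathbf{x}_T,x_n)$ of the game (viewed as an ordinary game in which every player acts independently) yields a CoE simply by replacing the team's independent play with the product correlated strategy $x_T(\mathbf{a}_T)=\prod_{i\in T,a_i\in\mathbf{a}_T}x_i(a_i)$. Thus a single call to a Nash-equilibrium oracle on the same game, followed by the polynomial-time product map, produces a CoE. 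For three-player games the team consists of two players, so this product strategy is a distribution over $A_1\times A_2$ and is of polynomial size; hence the output is a valid polynomial-size witness and the reduction is legitimate. Since PPAD is closed under polynomial-time reductions and Nash lies in PPAD, finding a CoE is in PPAD.

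For hardness I would invoke Theorem \ref{reductionfromzpto30p} directly. It gives a polynomial-time construction that turns any two-player game $G_2$ into a three-player zero-sum adversarial team game $G_T$ (adding one dummy single-action team player and setting $u_1=-(u_2+u_3)$), together with the explicit correspondence $x_T(a_1,a_2)=x_2(a_2)$ establishing that $(x_2,x_3)$ is an NE of $G_2$ if and only if $(x_T,x_3)$ is a CoE of $G_T$. Both the forward map and its inverse (reading off $x_2(a_2)=x_T(a_1,a_2)$, since $a_1$ is player $1$'s only action) are computable in polynomial time. Because finding an NE of a two-player game is PPAD-hard \cite{chen2006settling,daskalakis2009complexity}, this reduction shows that finding a CoE is PPAD-hard even when the game is restricted to three-player zero-sum adversarial team games. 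Combining the two directions gives PPAD-completeness.

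The step I expect to need the most care is the membership argument, for two reasons. First, a CoE is in general a genuinely correlated object, so a naive witness (a full distribution over $\mathbf{A}_T$) could be exponential in the number of team players; the point is that we never need such a general CoE for mere existence, because Theorem \ref{theorem_nash_correlated_nash} guarantees a product-form CoE that is compactly representable, and for the three-player case at hand the representation is polynomial outright. Second, for more than two players exact Nash equilibria may be irrational, so the reduction to Nash should be phrased in terms of the standard (approximate) search version for which PPAD-membership is known; the product map and the correspondence with CoE go through verbatim under that formulation. Neither issue is a genuine obstruction, but both should be stated explicitly so that the membership claim is airtight.
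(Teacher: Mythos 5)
Your proposal is correct and follows essentially the same route as the paper: the hardness direction is exactly the paper's argument, invoking Theorem \ref{reductionfromzpto30p} to transfer PPAD-hardness of two-player Nash to three-player zero-sum adversarial team games. Your explicit PPAD-membership argument (reducing CoE-search to Nash-search via the product map of Theorem \ref{theorem_nash_correlated_nash}, with the caveats about witness size and the approximate search formulation) is a welcome addition that the paper's proof leaves implicit.
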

\begin{proof}
The problem of finding an NE is PPAD-complete in   two-player games  \cite{chen2006settling,daskalakis2009complexity}. %
By Theorem \ref{reductionfromzpto30p}, each two-player game can be reduced into a three-player zero-sum adversarial team game. Therefore,  the problem of finding   a CoE is PPAD-complete in three-player zero-sum adversarial team games. 
\end{proof}

\subsection{Team-Maximizing CoE}
For an adversarial team game, we are interested in how to maximize the team's utility.
In this section, we introduce the Team-Maximizing CoE (TMCoE), which is a CoE but maximizes the team's utility among all CoEs.\footnote{Note that  a  TMCoE may not be a `team-maxmin' equilibrium, i.e., the adversary may not minimize the team's utility in general games. In zero-sum, the adversary maximizes his utility, which results in minimizing the team's utility. Then, in zero-sum adversarial team games, a TMCoE is a   `team-maxmin' equilibrium. 
} An example is shown in Table \ref{tab:differentequilibria}, where TMCoE 10 in the last column is a CoE but  maximizes the team's utility among all CoEs. 

Formally, for $x_T\in X_T$ and $x_n\in X_n$, we denote that the team's expected utility as: $  u_T(x_T,x_n) =\sum_{a_n\in A_n}x_n\sum_{\mathbf{a}_T\in \mathbf{A}_T}   u_T(\mathbf{a}_T, a_n ) x_T (\mathbf{a}_T).$  
We can then  obtain a CoE   maximizing the team's utility by solving the following nonlinear program:
\begin{subequations}\label{GCTMENLP} \small
 \begin{align}
 &\max_{x_T ,x_n }u_T(x_T,x_n)\\
&\quad u_n(x_T,x_n)\geq u_n(x_T,a_n)\quad \forall a_n\in A_n \label{GCTME_minimize1}\\
   &\quad   
   \sum_{\mathbf{a}_T\in \mathbf{A}_T} x_T (\mathbf{a}_T)=1, x_T (\mathbf{a}_T)\geq 0 \\
    &\quad    
   \sum_{a_n\in A_n}x_n (a_n)=1,x_n (a_n)\geq 0 \label{GCTME_cd} \\
  &   \sum_{\mathbf{a}_{-(i,n)}\in \mathbf{A}_{-(i,n)}}\!\!\!\!\!\!\!\!\!\!              x_T(a_i,\mathbf{a}_{-(i,n)})(u_i(a_i,\mathbf{a}_{-(i,n)},x_n)     -
 u_i(a'_i,\mathbf{a}_{-(i,n)},x_n))\geq 0 \nonumber \\
 &\quad\quad\quad\quad\quad\quad\quad\quad\quad\quad\quad\quad\quad\quad\quad\quad\quad\quad \forall i\in T, a_i,a'_i\in A_i. \label{GCTME_ce} 
\end{align}
\end{subequations}
The above constraints define the space of CoEs according to Definition 1 (i.e., Eq.(\ref{GCTME_ce}) is Eq.(\ref{GCTMEConstraint}), and Eq.(\ref{GCTME_minimize1}) is the adversary's best response), and then we can obtain a   TMCoE    by solving Program (\ref{GCTMENLP}). (All remaining proofs are in Appendix \ref{appendix_proofs}.) 

\begin{Theorem}
Solving Program (\ref{GCTMENLP}) obtains a CoE that maximizes the team's utility.
\end{Theorem}
\nop{\begin{proof}
Eq.(\ref{GCTME_minimize1})   makes sure that $x_n$ is a best response to $x_T$, i.e.,
\begin{equation*} 
\begin{split}
     & u_n(x_T,x_n)\geq u_n(x_T,a_n), \forall a_n\in A_n \\
    \Rightarrow&  u_n(x_T,x_n)\geq \sum_{a_n}x'_n(a_n)u_n(x_T,a_n), \forall x'_n\in X_n\\
     \Rightarrow& u_n(x_T,x_n)\geq  u_n(x_T,x'_n), \forall x'_n\in X_n.
\end{split}
\end{equation*}
That is, $x_n$ is a best response to $x_T$. Given $x_n$, $x_T$ is a CE according to Eq.(\ref{GCTME_ce}). Then Eqs.(\ref{GCTME_minimize1})--(\ref{GCTME_ce}) define the space of CoEs. Therefore, solving Program (\ref{GCTMENLP}) obtains a CoE that maximizes the team's utility.
\end{proof}
}

However, 
it is generally  hard to compute a TMCoE in   adversarial team games. Theorem \ref{reductionfromzpto30p} shows that each two-player game can be reduced into a three-player zero-sum adversarial team
game.
We then have the following result. 

\begin{Corollary}\label{theorem_nphard}
Computing a TMCoE is NP-hard, even for zero-sum three-player adversarial team games.
\end{Corollary}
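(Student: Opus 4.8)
The plan is to transfer the NP-hardness of computing an \emph{optimal} Nash equilibrium in two-player games to the TMCoE problem through the reduction of Theorem~\ref{reductionfromzpto30p}. The starting point is the well-known fact (Gilboa and Zemel; Conitzer and Sandholm) that, given a two-player game and a threshold, deciding whether there is an NE in which a designated player's expected payoff meets the threshold is NP-hard; equivalently, computing an NE that optimizes a single player's payoff is NP-hard. Since Theorem~\ref{reductionfromzpto30p} already embeds every two-player game $G_2=(\{2,3\},A_2\times A_3,(u_2,u_3))$ into a three-player zero-sum adversarial team game $G_T$ with $T=\{1,2\}$, $A_1=\{a_1\}$, and $u_1=-(u_2+u_3)$, I would reuse exactly this construction and argue that the team-maximization objective on $G_T$ coincides with a player-payoff objective on $G_2$.

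The key observation I would establish first is that the reduction is payoff-preserving in the relevant sense. In $G_T$ the whole game is zero-sum, so $u_T=u_1+u_2=-(u_2+u_3)+u_2=-u_3$; that is, the team's utility is identically the negative of the adversary's utility. Moreover, the bijection $x_T(a_1,a_2)=x_2(a_2)$ of Theorem~\ref{reductionfromzpto30p} induces the same distribution over outcomes (player $1$ always plays $a_1$), so $u_T(x_T,x_3)=-u_3(x_2,x_3)$ along corresponding strategy profiles. Combining this with the equilibrium correspondence of Theorem~\ref{reductionfromzpto30p}, maximizing $u_T$ over all CoEs of $G_T$ is exactly the same as maximizing $-u_3$, i.e.\ minimizing the adversary's payoff $u_3$, over all NEs of $G_2$. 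Hence a TMCoE of $G_T$ corresponds precisely to an NE of $G_2$ that is worst for player $3$, and the optimal team value equals the negative of the minimal NE payoff of player $3$.

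From here I would reduce the decision problem ``does $G_2$ admit an NE with $u_3\le t$?'' to the TMCoE problem: computing a TMCoE yields the optimal team value $v^\star$, and $v^\star\ge -t$ iff $G_2$ has an NE with $u_3\le t$. So if the former decision problem is NP-hard, then computing a TMCoE is NP-hard, and since the games produced are three-player zero-sum adversarial team games, the hardness holds in that restricted class, matching the statement. The main obstacle is to pin down the right orientation of the known hardness result: the classical Gilboa--Zemel constructions are phrased as \emph{maximizing} a player's own payoff, whereas here I need the NP-hardness of the \emph{worst} NE for player $3$ (minimizing $u_3$). I would address this either by invoking the worst-case-NE-for-a-player variant (which is also NP-hard) or, if a clean citation is unavailable, by directly adapting the SAT reduction so that the satisfiability-dependent equilibria are the ones giving player $3$ a low payoff while a default high-payoff equilibrium always exists; care is needed to ensure the default equilibrium cannot undercut the separation and that the payoffs stay consistent with the zero-sum constraint imposed on $G_T$ by $u_1=-(u_2+u_3)$.
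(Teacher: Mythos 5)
Your proposal is correct and follows essentially the same route as the paper: reuse the dummy-player reduction of Theorem~\ref{reductionfromzpto30p}, observe that zero-sumness forces $u_T=-u_3$ so that team-maximization over CoEs of $G_T$ becomes NE-minimization of player 3's payoff in $G_2$, and then supply NP-hardness for that minimization variant. The one step you defer --- adapting the Conitzer--Sandholm SAT construction so that the satisfiability-dependent equilibria give the designated player a \emph{low} payoff while a default equilibrium gives a strictly higher one --- is exactly what the paper's appendix carries out in full.
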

\nop{\begin{proof}
It has been shown that computing an NE maximizing the sum of players' utilities in two-player games is NP-hard \cite{conitzer2003complexity}.   To show that computing a TMCoE is NP-hard, we reduce each two-player game into a three-player  adversarial team game. Then, exploiting the complexity result in \cite{conitzer2003complexity}, we can     show that computing a TMCoE is NP-hard in three-player adversarial team games.

Formally, for any two-player game $G_2=(\{2,3\},A_2\times A_3,(u_2,u_3))$, we construct a three-player zero-sum adversarial team game $G_T=(\{1,2,3\},A_1\times A_2\times A_3,(u_1,u_2,u_3))$ such that the team is $T=\{1,2\}$, the adversary is player $3$, the action space and utility function for player 2 or player 3 are the same in both $G_2$ and $G_T$,  respectively, $A_1=\{a_1\}$,  and    $u_1=-(u_2+u_3)$. That is, we add a dummy team player with only one action and the utility is the negative value of the sum of other players' utilities:  $\forall a_2\in A_2,a_3\in A_3, u_2(a_1,a_2,a_3)=u_2(a_2,a_3), u_3(a_1,a_2,a_3)=u_3(a_2,a_3),$ $   u_1(a_1,a_2,$ $a_3)$ $=-(u_2(a_1,a_2,a_3)+u_3(a_1,a_2,a_3)).$ 

We can verify that, 
$(x_2,x_3)$ is an NE in $G_2$ if and only if $(x_T,x_3)$ with $x_T(a_1,a_2)=x_2(a_2)$ for each $a_2\in A_2$ is a CoE in $G_T$ (see Theorem \ref{reductionfromzpto30p}).


Then, we can   use the complexity result in \cite{conitzer2003complexity}  to show that computing a TMCoE is NP-hard in three-player adversarial team games. However, to show that computing a TMCoE is NP-hard in `zero-sum' three-player adversarial team games, we need to revise the proof in \cite{conitzer2003complexity}.  In the above reduction from a two-player game to a three-player adversarial team game, to obtain a zero-sum three-player adversarial team game, we have to make sure that the  dummy team player's utility    is the negative value of the sum of other players' utilities. 
 Then, maximizing the sum of utilities of players $2$ and $3$ may not maximize the team's utilities. In fact, to maximize the team's utilities, we need to minimize the utility of player $3$ in $G_T$. 
  If $G_2$ is a symmetric game, we still need to minimize the sum of utilities of players $2$ and $3$ in $G_2$.  In  \cite{conitzer2003complexity}, the reduction constructs a two-player symmetric game from each Boolean formula in the conjunctive normal form $\phi$, which shows that there is  a  Nash  equilibrium  in that game with utility  1  for each player if  and  only  if $\phi$ is satisfiable; otherwise, the only equilibrium makes each player get 0. Obviously,  this utility setting is not suitable for showing  that computing an NE minimizing the player's utility is NP-hard. Therefore, we need to revise the proof in \cite{conitzer2003complexity} to show that computing an NE minimizing the player's utility is NP-hard. The complete proof is in the Appendix.
\nop{
We construct a two-player symmetric game from each Boolean formula in the conjunctive normal form $\phi$ (e.g., $(z_1\wedge z_3\wedge \neg z_5)\cup(z_1\wedge \neg z_3\wedge   z_5)$). $V=\{y_1,\dots,y_k\}$ is a set of variables in $\phi$ with $|V|=k$,  $L=\{z_1,\neg z_1,\dots,z_k,\neg z_k\}$ is the set of literals (each variable corresponds to a positive literal and a negative literal), and $C$ is the set of clauses (e.g., $C=\{(z_1\wedge z_3\wedge \neg z_5), (z_1\wedge \neg z_3\wedge   z_5)\}$). $v:L\rightarrow V$ is a function mapping each literal to its corresponding variable, i.e., $v(z_1)=v(\neg z_1)=y_1$. $G(\phi)$ is a symmetric two-player normal-form game such that $A'=A_1=A_2=L\cup V\cup C\cup \{f\}$ with the following utility functions: \begin{align*}
    & u_1(l_1,l_2)=u_2(l_2,l_1)=1 \quad \forall l_1,l_2 \in L, s.t. l_1\neq l_2;\\
    &u_1(l,\neg l)=u_2(\neg l,l) = -2 \quad \forall l\in L;\\
    &u_1(l,x)=u_2(x,l)=-2 \quad \forall l\in L, x\in A'\setminus L;\\
    &u_1(y,l)=u_2(l,y) =2 \quad \forall y\in V,l\in L s.t. v(l)\neq y;\\
     &u_1(y,l)=u_2(l,y) =2-k \quad \forall y\in V,l\in L, s.t. v(l)= y;\\
      &u_1(y,x)=u_2(x,y) =-2 \quad \forall y\in V,x\in A'\setminus L;\\
        &u_1(c,l)=u_2(l,c) =2 \quad \forall c\in C,l\in L, s.t. l\notin c;\\
   &u_1(c,l)=u_2(l,c) =2-k \quad \forall c\in C,l\in L, s.t. l\in c;\\
      &u_1(c,x)=u_2(x,c) =-2 \quad \forall c\in C,x\in A'\setminus L;\\
     & u_1(f,f)=u_2(f,f)=2;\\
    &  u_1(f,x)=u_2(x,f)=1  \quad \forall  x\in A'\setminus \{f\}.
\end{align*}
Similar to the proof in \cite{conitzer2003complexity}, we show that if $(l_1,l_2,\dots,l_k)$ with $v(l_i)=y_i$ satisfies $\phi$, then there is an NE in $G(\phi)$ where each player plays $l_i$ with probability $\frac{1}{k}$ with expected utility 1, and vice versa. The only other NE is  the strategy profile that each player plays $f$ with utility $2$ for each player. 
Obviously, this strategy profile that each player plays $f$ is always an NE.

Now, we verify that the strategy profile where each player plays each $l_i$ with probability $\frac{1}{k}$ is an NE.
If $(l_1,l_2,\dots,l_k)$ with $v(l_i)=y_i$ satisfies $\phi$, we show that the strategy profile where each player plays $l_i$ with probability $\frac{1}{k}$ is an NE.
We need to show: if one player plays this strategy,  playing this strategy (i.e., playing each $l_i$ with probability $\frac{1}{n}$)  is a best response of the other player. We can check it one by one: 1) playing any $l_i$: the utility is 1; 2) playing any negation of $l_i$: the utility is $\frac{-2}{k}+\frac{k-1}{k}<1$; 3) playing any $y\in V$: the utility is $\frac{2-k}{k}+\frac{2k-2}{k}=1$ because there is $l_i$ such that $v(l_i)=y$; 4) playing any $c\in C$:  the utility is at most $\frac{2-k}{k}+\frac{2k-2}{k}=1$ because there is at least one literal $l_i\in c$; and 5) playing $f$: the utility is 1. Then, if one player plays this strategy,  playing this strategy (i.e., playing each $l_i$ with probability $\frac{1}{k}$) is a best response for the other player.

Now, we show that there are no more NEs.
If one player plays $f$, then the only best response of the other player is playing $f$. Then there are no NEs such that only one player always plays $f$. Suppose   players both play $f$ with the probability that is less than one. Let us consider  the expected social welfare (the sum of each player's expected utility) with that   players do not play $f$. It is clear that there is no outcome with this social welfare greater than 2 (e.g., if $u_1(y,l)=2$, $u_2(y,l)=-2$). Moreover, any outcome involving  an element of $V$ or $C$ has  social welfare strictly less than 2. Then, if any player plays an element of $V$ and $C$, the expected social welfare is strictly below 2, which means that at least one player's expected utility is strictly less than 1, given that   players do not play $f$. Then, this player will deviate from it to play $f$. Therefore, in any NE, no element in $V$ or $C$ is played.

Now, we can assume players only play elements in $L\cup \{f\}$ with positive probability. If one player plays $f$ with positive probability, then playing $f$ is the strict best response for the other player. Then, if $f$ is played in an NE, $(f,f)$ is that NE.

Now, we assume there is an NE where both players only play elements in $L$ with positive probability. If there is $l\in L$ such that $l$ or $\neg l$  is played with probability less than $\frac{1}{k}$, the playing $v(l)$ for the other player obtains the utility more than $\frac{2-k}{k}+\frac{2k-2}{k}=1$, and then this strategy profile cannot be an NE. Then if $l\in L$ or $\neg l$ is played, the probability is $\frac{1}{k}$.

If there is $l\in L$ that player 1 plays it with positive probability but player 2 plays its negation, each player will have the expected utility less than 1, which is dominated by playing $f$. Then if $l_i$ is played by any player, both players will play it with the probability $\frac{1}{k}$. We then can assume that exactly one of each variable's corresponding literals is played with the probability $\frac{1}{k}$ by both players.
Then, in any NE, if literals are played with positive probability, they correspond to an assignment to the variables.

If $\phi$ is not satisfied, then there is a clause $c\in C$, which is not satisfied, i.e., none of the literals in $c$ is played. Then, playing $c$ will be better than playing any $l_i$ for both players. Then the strategy profile playing $l_i$ with $\frac{1}{k}$ is not an NE.

Therefore, there exists an NE in $G(\phi)$ with the utility 1 for each player if and only if $\phi$ is satisfiable; otherwise, the only equilibrium gives the utility 2 for each player. Then, finding an NE minimizing a player's utility is NP-hard in two-player games, and then finding  a TMCoE is NP-hard in zero-sum three-player adversarial team games.
}
\end{proof}
}

\subsection{More Opportunities}
 It is well-known that computing a CE is computationally less expensive than computing an NE. This can be captured by the fact that computing a CE only requires solving a linear program, whereas computing an NE requires finding its fixed point. That is, in multiplayer games, there are gaps between computing a CE (i.e., playing correlated actions) and computing an NE (i.e., playing independently). Now, we have a surprising result that the problem of finding a (or team-maximizing) CoE is PPAD-complete (or NP-hard), the same hardness for computing a (or optimal) NE, even though team players correlate their actions in CoE instead of playing independently in NE. Therefore, our CoE and NE are fundamentally different but share the same complexity result in the worst case, then we should design a new method to measure the complexity for computing equilibria better.

\section{Opportunities in Algorithm Design}\label{section_consistent}
We show the opportunities in algorithm design for computing the solution concept considering the difference between the  team's utilities by using  TMCoE as an example.

\subsection{Zero-Sum Adversarial Team Games with   Consistent Constraints}
The previous section shows that computing a TMCoE is hard in general games, so it is essential to identify and characterize the games that can avoid this  intractability barrier. In this section, we identify some zero-sum adversarial team games with this property, where   team players' utilities  are consistent with the team's whole utility. 

In two-player zero-sum games, NEs are exchangeable and can be computed by a linear program \cite{shoham2008multiagent}. 
Inspired by the structure of two-player zero-sum games, if the team utility function $u_T$ can represent the utility functions of all team players in   zero-sum adversarial team games, then these games can keep the properties of two-player zero-sum games. That is, the utility of each team player is consistent with the team's utility, which is represented by the following   consistent constraints:  $\forall i\in T, a_i,a'_i\in A_i$, $x_T\in X_T$, $x_n\in X_n$, there is $k>0$  with:
\begin{equation}\label{eqconsistentconstraint} \small
\begin{split}
     \sum_{\mathbf{a}_{-(i,n)}\in \mathbf{A}_{-(i,n)}} \!\!\!\!\!\!\!\!\!\!x_T(a_i,\mathbf{a}_{-(i,n)})(u_T(a_i,\mathbf{a}_{-(i,n)},x_n)    -
 u_T(a'_i,\mathbf{a}_{-(i,n)},x_n))   \\
  =    k \sum_{\mathbf{a}_{-(i,n)}\in \mathbf{A}_{-(i,n)}} \!\!\!\!\!\!\!\!\!\!x_T(a_i,\mathbf{a}_{-(i,n)})(u_i(a_i,\mathbf{a}_{-(i,n)},x_n)    -
 u_i(a'_i,\mathbf{a}_{-(i,n)},x_n)). 
\end{split}
\end{equation}
It means that   the upper term for the correlated constraint related to $u_T$ is 0 if and only if  the lower term for the correlated constraint related $u_i$ is   0; and   the upper term for the correlated constraint related to $u_T$ is greater than 0 if and only if the lower term for the correlated constraint related $u_i$ is     greater than 0. Therefore, the correlated constraint related to $u_T$ holds if and only if the  correlated constraint related to $u_i$ holds.
An example satisfying this constraint is: $u_i(a_T)=k_iu_T(a_T) $ with $k_i>0$ for each $i\in T$ and $a_T\in A_T$. 
That is, to be consistent, the   utility function $u_T$ of the team and the utility function $u_i$ of each team player do not have to be identical.  
Therefore, a TMCoE can be computed using the following formulation. 
\begin{subequations}  \label{ne_linear}\small
 \begin{align}
 &\max_{x_T,x_n}  u_T(x_T,x_n)\\
&\quad u_T(x_T,x_n)\leq u_T(x_T,a_n)\quad \forall a_n\in A_n \label{specialne_minimize1}\\
   &\quad    
  \sum_{\mathbf{a}_T\in \mathbf{A}_T} x_T (\mathbf{a}_T)=1, x_T (\mathbf{a}_T)\geq 0  \label{ne_1} \\
   &\quad \sum_{a_n\in A_n} x_n (a_n)=1,x_n(a_n)\geq 0.  \label{ne_2}
\end{align}
\end{subequations}

\begin{Theorem}\label{theorem_specailzerosumgame}
In   zero-sum adversarial team games with   consistent constraints shown in Eq.(\ref{eqconsistentconstraint}), 
a TMCoE is computed by Program (\ref{ne_linear}).
\end{Theorem}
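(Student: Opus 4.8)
The plan is to reduce the theorem to the classical minimax structure of two-player zero-sum games, using the two hypotheses to collapse the team into a single player that carries the utility $u_T$. First I would record what each hypothesis buys. Zero-sumness gives $u_n=-u_T$, so the adversary's best-response condition in Definition 1, namely $u_n(x_T,x_n)\ge u_n(x_T,a_n)$ for all $a_n$, is exactly $u_T(x_T,x_n)\le u_T(x_T,a_n)$ for all $a_n$, which is precisely constraint (\ref{specialne_minimize1}); moreover the team objective $u_T$ is the negative of the adversary's payoff. The consistency condition (\ref{eqconsistentconstraint}), with its strictly positive factor $k$, makes each team player's correlated-deviation term in Eq.(\ref{GCTMEConstraint}) (stated with $u_i$) nonnegative if and only if the same term stated with the common utility $u_T$ is nonnegative. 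Hence, in these games, a CoE is exactly a pair $(x_T,x_n)$ in which $x_n$ minimizes $u_T(x_T,\cdot)$ and $x_T$ satisfies the correlated-equilibrium constraints with respect to the single utility $u_T$; the team behaves as one player whose payoff is $u_T$.

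Second, I would pin down the optimal team value. For any CoE the adversary's best response forces $u_T(x_T,x_n)=\min_{a_n}u_T(x_T,a_n)\le \max_{x_T'}\min_{a_n}u_T(x_T',a_n)=:v$, so no CoE beats the maxmin value $v$. To show $v$ is attained by a CoE, I would take a saddle point $(x_T^\star,x_n^\star)$ of the two-player zero-sum game in which the team is the single maximizer with payoff $u_T$ and the adversary is the minimizer; such a saddle point exists by the minimax theorem. Then $x_n^\star$ best-responds to $x_T^\star$ (the adversary condition), and $x_T^\star$ best-responds to $x_n^\star$, so $x_T^\star$ is supported on joint actions maximizing $u_T(\cdot,x_n^\star)$; for any such joint action, changing a single coordinate $a_i\to a_i'$ cannot increase $u_T$, so every term of the $u_T$-version of Eq.(\ref{GCTMEConstraint}) is nonnegative, and by consistency the $u_i$-version holds for every team player. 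Thus $(x_T^\star,x_n^\star)$ is a CoE of value $v$, and the TMCoE value equals $v$.

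Third, I would match this to Program (\ref{ne_linear}). Constraint (\ref{specialne_minimize1}) together with $\sum_{a_n}x_n(a_n)=1$, $x_n\ge 0$ forces the bilinear objective to equal $\min_{a_n}u_T(x_T,a_n)$ (a convex combination cannot fall below every term it averages unless all positive-weight terms are minimizers), so maximizing over the feasible region returns the value $v$ with $x_T^\star$ a maxmin strategy and $x_n^\star$ an adversary best response. Reading the adversary's minmax strategy from the dual of the maxmin linear program, equivalently taking the primal--dual optimal pair, yields a genuine saddle point, which by the previous paragraph is a CoE attaining the TMCoE value $v$; hence Program (\ref{ne_linear}) computes a TMCoE.

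The main obstacle is exactly this last gluing step: it is not enough that the program returns a maxmin $x_T^\star$ together with some adversary best response $x_n^\star$, because a best response to a maxmin strategy need not be a minmax strategy, and against a non-minmax response the team's support may fail to maximize $u_T(\cdot,x_n^\star)$, breaking the correlated-equilibrium constraints (\ref{GCTMEConstraint}). I would therefore lean on linear-programming duality and the minimax theorem to guarantee that the extracted solution is a true saddle point, rather than an arbitrary optimizer of the bilinear program; this is the one place where the argument needs care beyond routine substitution of $u_n=-u_T$ and the consistency identity.
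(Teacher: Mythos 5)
Your proof is correct and follows the same overall route as the paper's: zero-sumness turns the adversary's best-response condition into constraint (\ref{specialne_minimize1}), the consistency condition (\ref{eqconsistentconstraint}) lets you replace each $u_i$ in the correlated constraints (\ref{GCTMEConstraint}) by the single utility $u_T$, and the $u_T$-version of those constraints is then verified from the support condition of a saddle point of the induced two-player zero-sum game between the aggregated team and the adversary. The substantive difference is that you explicitly isolate, and repair, a step the paper's proof passes over. The paper asserts that any optimal solution $(x_T,x_n)$ of Program (\ref{ne_linear}) ``is an NE between the team and the adversary'' and then applies the support characterization $u_T(x_T,x_n)=u_T(a_i,\mathbf{a}_{-(i,n)},x_n)\ge u_T(a'_i,\mathbf{a}_{-(i,n)},x_n)$. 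But optimality in Program (\ref{ne_linear}) only guarantees that $x_T$ is a maxmin strategy and that $x_n$ minimizes $u_T(x_T,\cdot)$; it does not force $x_n$ to be a minmax strategy, so $(x_T,x_n)$ need not be a saddle point and the support condition can fail. A matching-pennies-style instance (payoffs $u_T$ equal to $1$ on the diagonal and $0$ off it, with a one-member team, so (\ref{eqconsistentconstraint}) holds trivially with $k=1$) makes this concrete: $x_T=(1/2,1/2)$ with the pure $x_n=c_1$ is optimal for Program (\ref{ne_linear}) with value $1/2$, yet the correlated constraint for the recommended action on which $u_T(\cdot,c_1)=0$ is violated, so this optimal solution is not a CoE. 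Your remedy --- extracting the adversary's strategy from the dual of the maxmin linear program so that the returned pair is a genuine primal--dual saddle point --- is exactly what is needed for the output of the algorithm to be a TMCoE. So your argument is not only correct but closes a real gap in the paper's own proof, which is otherwise the same argument.
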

\nop{
\begin{proof}
Obviously, the space of $x_T$ in Eqs.(\ref{specialne_minimize1}) and (\ref{ne_1}) includes the space      of $x_T$ in Eqs.(\ref{GCTME_minimize1})--(\ref{GCTME_ce}) with additional constraints shown in Eq.(\ref{GCTME_ce}), and Eq.(\ref{GCTME_minimize1}) is equivalent to Eq.(\ref{specialne_minimize1}) in zero-sum games. Program (\ref{ne_linear}) as a maxmin program computes an NE for two-player zero-sum games.
If $(x_T,x_n)$  is an optimal solution in Program (\ref{ne_linear}), i.e., $(x_T,x_n)$ is an NE between the team and the adversary,  then for each $i\in T$, $a_i,a'_i\in A_i$, $\mathbf{a}_{-(i,n)}\in \mathbf{A}_{-(i,n)}$, we have: If $x_T(a_i,\mathbf{a}_{-(i,n)}) >0$, 
\begin{equation*} 
\begin{split} 
& u_T(x_T,x_n)=  u_T(a_i,\mathbf{a}_{-(i,n)},x_n) \geq  u_T(a'_i,\mathbf{a}_{-(i,n)},x_n) \\
  \Rightarrow &  x_T(a_i,\mathbf{a}_{-(i,n)})u_T(a_i,\mathbf{a}_{-(i,n)},x_n) \\
  &\geq 
 x_T(a_i,\mathbf{a}_{-(i,n)}) u_T(a'_i,\mathbf{a}_{-(i,n)},x_n). 
\end{split}
\end{equation*}
The above relation holds as well when    $x_T(a_i,\mathbf{a}_{-(i,n)}) =0$. Therefore, $\forall i\in T, a_i,a'_i\in A_i$,
\begin{equation*} 
       \sum_{\mathbf{a}_{-(i,n)}\in \mathbf{A}_{-(i,n)}}\!\!\!\!\!\!\!\! x_T(a_i,\mathbf{a}_{-(i,n)})(u_T(a_i,\mathbf{a}_{-(i,n)},x_n)    -
 u_T(a'_i,\mathbf{a}_{-(i,n)},x_n))\geq 0.
\end{equation*}
By Eq.(\ref{eqconsistentconstraint}), we have: $\forall i\in T, a_i,a'_i\in A_i$,
\begin{equation*} 
       \sum_{\mathbf{a}_{-(i,n)}\in \mathbf{A}_{-(i,n)}}\!\!\!\!\!\!\!x_T(a_i,\mathbf{a}_{-(i,n)})(u_i(a_i,\mathbf{a}_{-(i,n)},x_n)   -
 u_i(a'_i,\mathbf{a}_{-(i,n)},x_n))\geq 0.
\end{equation*}
$x_n$ is a best response to $x_T$ as well. Therefore, $(x_T,x_n)$ is in the space of Eqs.(\ref{GCTME_minimize1})--(\ref{GCTME_ce}). Then $(x_T,x_n)$  is an optimal solution of Program (\ref{GCTMENLP}), i.e., a TMCoE. That is, each optimal solution of Program (\ref{ne_linear}) is an optimal solution of Program (\ref{GCTMENLP}), and both programs have the same optimal objective value. 
\end{proof}
}

In   zero-sum adversarial team games with   consistent constraints shown in Eq.(\ref{eqconsistentconstraint}),  the team’s utility function is consistent with the utility functions of all team players and then can represent the preference of each team player. If we treat the team as a single player, a zero-sum adversarial team game is equivalent to a two-player zero-sum game. Therefore, a TMCoE $(x_T,x_n)$ is  equivalent to a two-player zero-sum NE. Formally, we have the following property.
\begin{Theorem}\label{therorem_equivalent}
In   zero-sum adversarial team games with   consistent constraints shown in Eq.(\ref{eqconsistentconstraint}), a TMCoE $(x_T,x_n)$ is   a two-player  NE (i.e., $x_T$   for the team as a single player and $x_n$   for the adversary),   and vice versa.
\end{Theorem}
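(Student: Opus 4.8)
The plan is to route the equivalence through the two mathematical programs already in play together with the minimax characterization of two-player zero-sum NE, using Theorem \ref{theorem_specailzerosumgame} as the bridge. Throughout, let $\bar G$ denote the two-player zero-sum game in which the team is collapsed into a single player whose pure strategies are the joint actions $\mathbf a_T\in\mathbf A_T$, whose mixed strategies are $x_T\in X_T$, and whose utility is $u_T$, playing against the adversary $n$ with utility $u_n=-u_T$. By the minimax theorem and the standard LP characterization of zero-sum equilibria \cite{shoham2008multiagent}, $(x_T,x_n)$ is an NE of $\bar G$ if and only if it is an optimal solution of the maxmin Program (\ref{ne_linear}): at such an optimum $x_T$ is a maxmin strategy and $x_n$ is a utility-minimizing (best) response, and conversely.

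First I would record that in a zero-sum adversarial team game $u_T=-u_n$, so the adversary best-response constraint Eq.(\ref{GCTME_minimize1}) of Program (\ref{GCTMENLP}) is literally equivalent to the team-minimization constraint Eq.(\ref{specialne_minimize1}) of Program (\ref{ne_linear}). Consequently the feasible region of Program (\ref{GCTMENLP}) is exactly the feasible region of Program (\ref{ne_linear}) intersected with the correlated constraints Eq.(\ref{GCTME_ce}); in particular every TMCoE is feasible for Program (\ref{ne_linear}).

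For the forward direction, let $(x_T,x_n)$ be a TMCoE, i.e.\ an optimal solution of Program (\ref{GCTMENLP}). By the previous paragraph it is feasible for Program (\ref{ne_linear}), and by Theorem \ref{theorem_specailzerosumgame} the two programs share the same optimal objective value; hence $(x_T,x_n)$ attains the optimum of Program (\ref{ne_linear}) and is therefore an NE of $\bar G$. For the reverse direction, let $(x_T,x_n)$ be an NE of $\bar G$, equivalently an optimal solution of Program (\ref{ne_linear}). Since $x_T$ is a best response of the aggregated team to $x_n$, every joint team action in its support maximizes $u_T(\cdot,x_n)$, which yields, term by term, the correlated constraints with respect to $u_T$; invoking the consistency constraints Eq.(\ref{eqconsistentconstraint}) then converts these into the per-player constraints Eq.(\ref{GCTME_ce}). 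Thus $(x_T,x_n)$ is feasible for Program (\ref{GCTMENLP}), and since it attains the common optimal value it is optimal there, i.e.\ a TMCoE. This is precisely the argument underlying Theorem \ref{theorem_specailzerosumgame}.

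The step I expect to carry the real weight is the reverse inclusion, specifically the passage from \emph{the aggregated team best-responds} to \emph{each individual team player's correlated-equilibrium constraint holds}. The consistency constraint Eq.(\ref{eqconsistentconstraint}) is exactly the hinge here: it guarantees that the team-level incentive quantity and the player-$i$ incentive quantity share the same sign, being related by a positive factor $k$, so the $u_T$-correlated constraint implied by the aggregate best response is nonnegative precisely when the $u_i$-correlated constraint is. Without this consistency hypothesis the equivalence fails, so I would emphasize that this is where the combination of the zero-sum property and the consistency structure is genuinely used; the remaining manipulations, namely the identity $u_T=-u_n$ and the minimax characterization of Program (\ref{ne_linear}), are routine.
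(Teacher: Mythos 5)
Your proposal is correct and follows essentially the same route as the paper's own proof: both arguments identify the adversary's best-response constraint with the team-minimization constraint in the zero-sum case, observe that Program (\ref{GCTMENLP}) is Program (\ref{ne_linear}) with the extra correlated constraints, and then use Theorem \ref{theorem_specailzerosumgame} to transfer optimality between the two programs. You merely make explicit two steps the paper leaves implicit — the support argument showing that an aggregated-team best response yields the $u_T$-correlated constraints, and the role of Eq.~(\ref{eqconsistentconstraint}) in converting these to the per-player constraints of Eq.~(\ref{GCTME_ce}).
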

\nop{\begin{proof}
Eq.(\ref{GCTME_minimize1}) is equivalent to Eq.(\ref{specialne_minimize1}) in zero-sum games. Then, the feasible solution space of Eqs.(\ref{GCTME_minimize1})--(\ref{GCTME_cd}) is equivalent to the feasible solution space of Eqs.(\ref{specialne_minimize1})--(\ref{ne_2}). Program (\ref{ne_linear}) and Program (\ref{GCTMENLP}) have the same objective function, and  Program (\ref{GCTMENLP}) has additional constraints shown in Eq.(\ref{GCTME_ce}). That is, the feasible solution space in  Program (\ref{ne_linear}) includes the feasible solution space in  Program (\ref{GCTMENLP}). Program (\ref{ne_linear}) as a maxmin program computes an NE for two-player zero-sum games.  By Theorem \ref{theorem_specailzerosumgame}, each optimal solution of Program (\ref{ne_linear}) is  an optimal solution of Program (\ref{GCTMENLP}). That is, additional constraints shown in Eq.(\ref{GCTME_ce}) are redundant in   zero-sum adversarial team games with   consistent constraints shown in Eq.(\ref{eqconsistentconstraint}). Then, in these games, a TMCoE $(x_T,x_n)$ is   a two-player  NE (i.e., $x_T$   for the team as a single player and $x_n$   for the adversary),   and vice versa.
\end{proof}
}

CoE is not unique in some games because each NE induces a CoE (Theorem \ref{theorem_nash_correlated_nash},) and NE is usually not unique. If there are multiple TMCoEs in a game, then players may face the equilibrium selection problem \cite{brown2019superhuman}, i.e., selecting which   equilibrium to execute to guarantee that strategies selected by different players still form an equilibrium. However, this equilibrium selection problem can be solved if TMCoEs are exchangeable, i.e.,  if $(x_T,x_n)$ and $(x'_T,x'_n)$ are TMCoEs, then $(x'_T,x_n)$ and $(x_T,x'_n)$ are also TMCoEs. Our TMCoE  keeps this property. 
\begin{Theorem}
In   zero-sum adversarial team games with   consistent constraints shown in Eq.(\ref{eqconsistentconstraint}),  TMCoEs are exchangeable.
\end{Theorem}
\nop{\begin{proof}
$(x_T,x_n)$ and $(x'_T,x'_n)$ are TMCoEs. By Theorem \ref{therorem_equivalent}, $(x_T,x_n)$ and $(x'_T,x'_n)$ are NEs between the team and the adversary.  Then, we have: 
\begin{equation*} 
    u_T(x_T,x_n)\leq u_T(x_T,x'_n)\leq u_T(x'_T,x'_n)   \leq u_T(x'_T,x_n)\leq u_T(x_T,x_n).
\end{equation*}
Therefore,  $ u_T(x_T,x_n)=u_T(x'_T,x'_n)=u_T(x'_T,x_n)=u_T(x_T,x'_n)$. Then $(x'_T,x_n)$ and $(x_T,x'_n)$ are NEs between the team and the adversary and then  TMCoEs in zero-sum adversarial team games with   consistent constraints shown in Eq.(\ref{eqconsistentconstraint}).
\end{proof}
}

\begin{table} 
\centering\small
\begin{tabular}{|c|c|c|c|c|c|c|c|}
        \cline{1-3}\cline{5-7} $c_1$&$b_1$&$b_2$&&$c_2$&$b_1$&$b_2$  \\\cline{1-3}\cline{5-7}
         $a_1$&0,0,0&0,1,-1&&$a_1$&0,0,0&0,1,-1\\\cline{1-3}\cline{5-7}
         $a_2$&1,0,-1&0,0,0&&$a_2$&1,0,-1&0,1,-1\\\cline{1-3}\cline{5-7}
     
    \end{tabular}
       \caption{\small A zero-sum adversarial team game: Player 1 with  $A_1=\{a_1,a_2\}$ and player 2 with $A_2=\{b_1,b_2\}$ form a team, player 3 with $A_3=\{c_1,c_2\}$ is the adversary.  CoEs are not exchangeable in this zero-sum game: $(a_2,b_1,c_1)$ and $(a_1,b_2,c_2)$ are both CoEs, but $(a_2,b_1,c_2)$ after being exchanged is not a CoE because $b_1$ is strictly dominated by $b_2$ when $c_2$ is played by player 3. Note that, in this game,   consistent constraints shown in Eq.(\ref{eqconsistentconstraint}) do not hold. For example,  for $(a_2,b_1,c_1)$  and $(a_2,b_2,c_1)$, the team's utilities are 1 and 0, respectively, and player 2's utilities are 0 in both cases. Then, player 2's utility is not   consistent with the team’s whole utility, i.e., there does not exist $k>0$ such that $(1-0)=k(0-0)$ when $b_1$ is recommended to player 2.  
    }
    \label{tab:team_adv_game2} 
   \end{table}
However,  if   consistent constraints shown in Eq.(\ref{eqconsistentconstraint}) do not hold, TMCoEs  may  not  be exchangeable. That is, if $(x_T,x_n)$ and $(x'_T,x'_n)$ are TMCoEs, then $(x_T,x'_n) $ or $(x'_T,x_n)$ may not be  a CoE,   as shown in Table \ref{tab:team_adv_game2}. 
\nop{
 \begin{table}
    \centering\small
    \begin{tabular}{|c|c|c|c|c|c|c|c|}
        \cline{1-3}\cline{5-7} $c_1$&$b_1$&$b_2$&&$c_2$&$b_1$&$b_2$  \\\cline{1-3}\cline{5-7}
         $a_1$&0,0,0&0,1,-1&&$a_1$&0,0,0&0,1,-1\\\cline{1-3}\cline{5-7}
         $a_2$&1,0,-1&0,0,0&&$a_2$&1,0,-1&0,1,-1\\\cline{1-3}\cline{5-7}
     
    \end{tabular}
    \caption{A zero-sum adversarial team game: Player 1 with  $A_1=\{a_1,a_2\}$ and player 2 with $A_2=\{b_1,b_2\}$ form a team, player 3 with $A_3=\{c_1,c_2\}$ is the adversary.  CoEs are not exchangeable in this zero-sum game: $(a_2,b_1,c_1)$ and $(a_1,b_2,c_2)$ are both CoEs, but $(a_2,b_1,c_2)$ after being exchanged is not a CoE because $b_1$ is strictly dominated by $b_2$ when $c_2$ is played by player 3. Note that, in this game,   consistent constraints shown in Eq.(\ref{eqconsistentconstraint}) do not hold. For example,  for $(a_2,b_1,c_1)$  and $(a_2,b_2,c_1)$, the team's utilities are 1 and 0, respectively, and player 2's utilities are 0 in both cases. Then, player 2's utility is not   consistent with the team’s whole utility, i.e., there does not exist $k>0$ such that $(1-0)=k(0-0)$ when $b_1$ is recommended to player 2.  
    }
    \label{tab:team_adv_game2}
\end{table}
}

Now, we consider the computational complexity of computing a TMCoE in  zero-sum adversarial team games with   consistent constraints shown in Eq.(\ref{eqconsistentconstraint}).
Actually, Program (\ref{ne_linear}) is equivalent to: $\max_{x_T\in X_T}\min_{x_n\in X_n}u_T(x_T,x_n).$  Through the dual linear program of the minimizing  problem over $x_n$, we can obtain a linear program to compute a TMCoE. Formally, we have the following theorem showing that we can have a  polynomial-time algorithm to compute a TMCoE in these games.
\begin{Theorem}\label{theorem_polynomial}
In   zero-sum adversarial team games with   consistent constraints shown in Eq.(\ref{eqconsistentconstraint}), 
there is a polynomial-time algorithm to compute a TMCoE.
\end{Theorem}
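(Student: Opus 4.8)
The plan is to show that Program~(\ref{ne_linear}) reduces to solving a two-player zero-sum matrix game, which is a classical linear-programming computation. By Theorem~\ref{theorem_specailzerosumgame}, solving Program~(\ref{ne_linear}) already yields a TMCoE in these games, so it suffices to argue that Program~(\ref{ne_linear}) can be solved in polynomial time. The key observation is that, as noted just before the statement, Program~(\ref{ne_linear}) is exactly the maxmin program $\max_{x_T\in X_T}\min_{x_n\in X_n}u_T(x_T,x_n)$, where the inner minimization ranges only over the vertices $a_n\in A_n$ because $u_T(x_T,\cdot)$ is linear in $x_n$ (this is precisely what constraint~(\ref{specialne_minimize1}) encodes, since minimizing over $X_n$ is attained at a pure action).

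First I would fix the decision variables as the correlated team strategy $x_T\in X_T=\Delta(\mathbf{A}_T)$, a distribution over the \emph{joint} team action space $\mathbf{A}_T$, together with a scalar $v$ representing the team's guaranteed value. The crucial point for polynomiality is the size of the linear program: the number of variables is $|\mathbf{A}_T|+1$ and the number of constraints is $|A_n|$ (one best-response inequality $u_T(x_T,a_n)\ge v$ per adversary action) plus the simplex constraints on $x_T$. Both $|\mathbf{A}_T|=\prod_{i\in T}|A_i|$ and $|A_n|$ are polynomial in the \emph{input size} of the normal-form game $G$, whose description already lists a utility value for every joint action profile $\mathbf{a}\in\mathbf{A}$; hence the program has size polynomial in the size of $G$.

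Next I would write the resulting program explicitly as
\begin{equation*}
\max_{x_T\in X_T,\,v}\ v\quad\text{subject to}\quad v\le \sum_{\mathbf{a}_T\in\mathbf{A}_T}u_T(\mathbf{a}_T,a_n)\,x_T(\mathbf{a}_T)\ \ \forall a_n\in A_n,
\end{equation*}
and note that this is a standard linear program: the objective is linear in $(x_T,v)$, and every constraint, including the simplex constraints~(\ref{ne_1}), is linear. I would then invoke the fact that linear programs are solvable in time polynomial in the number of variables, the number of constraints, and the bit-length of the coefficients (e.g.\ by interior-point or ellipsoid methods). Recovering the adversary strategy $x_n$ is handled by the dual program of the inner minimization — the dual variables on the best-response constraints form the optimal $x_n$ — exactly as remarked before the theorem statement.

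The main obstacle to watch for is ensuring that treating the team as a single player with the aggregate payoff matrix $u_T$ is \emph{legitimate} here, i.e.\ that the individual correlated-incentive constraints~(\ref{GCTME_ce}) do not add anything. This is not automatic: in general they are binding, and in Corollary~\ref{theorem_nphard} their presence is what drives NP-hardness. The resolution is Theorem~\ref{theorem_specailzerosumgame} (relying on the consistent constraints~(\ref{eqconsistentconstraint})), which shows those extra constraints are redundant, so that an optimal solution of the team-as-single-player matrix game is automatically a TMCoE of the original game. I would therefore structure the proof as: (i) cite Theorem~\ref{theorem_specailzerosumgame} to equate the TMCoE with an optimum of Program~(\ref{ne_linear}); (ii) rewrite Program~(\ref{ne_linear}) as the linear program above of size polynomial in $G$; and (iii) conclude via polynomial-time solvability of linear programming, with $x_n$ extracted from the dual optimum.
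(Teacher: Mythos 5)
Your proposal is correct and follows essentially the same route as the paper: both reduce the problem to the two-player zero-sum maxmin linear program via the redundancy of the correlated-incentive constraints (the paper invokes Theorem~\ref{therorem_equivalent}, you invoke Theorem~\ref{theorem_specailzerosumgame}, which underlies it), and both conclude by polynomial-time solvability of linear programming with $x_n$ recovered from the dual. Your version is slightly more careful in spelling out that the LP has $|\mathbf{A}_T|+1$ variables and that this is polynomial in the size of the normal-form input, a point the paper leaves implicit.
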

\nop{\begin{proof}
There is a polynomial-time algorithm to compute an NE in two-player zero-sum games, then we can obtain that there is a polynomial-time algorithm to compute a TMCoE by Theorem \ref{therorem_equivalent}. Formally, Program (\ref{ne_linear}) is equivalent to the  program: $\max_{x_T\in X_T}\min_{x_n\in X_n}u_T(x_T,x_n).$  
The reason is that  Eq.(\ref{specialne_minimize1}) is equivalent to $u_T(x_T,x_n)\leq u_T(x_T,x'_n)$ with $ \forall x'_n\in X_n$.  Through the dual linear program of the minimizing  problem over $x_n$, we can obtain a linear program \cite{shoham2008multiagent}. 
\end{proof}
}


\section{Discussion 
}
In this section, we continue showing more advantages based on our position of considering the  difference in utility functions of team players and its opportunities.\\
\subsection{Cooperation and Competition} 
The game structure of adversarial team games allows some form of cooperation and competition. 
There are other cooperative games, e.g., coalitional games with transferable/nontransferable utility  \cite{shoham2008multiagent}, which do not model individual player’s possible actions, 
but we model individual player’s possible actions. 
In the economic literature \cite{brandenburger1996co,bengtsson2000coopetition,bouncken2015coopetition,okura2014coopetition,sun2005coopetitive}, co-opetition is used to describe  business strategies with simultaneous cooperation and competition among competitors. That is, their co-opetition describes  strategies in the strategy space, but our co-opetition describes the structure of players in   adversarial team games, i.e., cooperation between team players and competition between the team and the adversary. Therefore, their co-opetition is conceptually different from our co-opetition. 

In   adversarial team games with consistent constraints shown in Eq.(\ref{eqconsistentconstraint}) of Section \ref{section_consistent}, if the adversary's policy is fixed, then the resulting games are   potential games \cite{shoham2008multiagent}. However, the adversary's policy is not fixed in our game setting. Then, the games discussed in Section \ref{section_consistent} are not potential games. 
In addition, these special adversarial team games are not adversarial potential games \cite{Anagnostides2023Algorithms} (in which both minimizer and maximizer use the same potential functions) because a potential function does not exist for both the team and adversary in our games discussed in Section \ref{section_consistent}.

\subsection{Concave Games} 
Our CoE is not a Rosen-NE \cite{rosen1965existence} in the metagame with two players (two-player concave game). According to the definition of Rosen-NE \cite{rosen1965existence}, each player plays a strategy obtained by solving the problem $\max_{x_i} u_i(x_i,x_{-i})$, such that $x_i\in C_i(x_{-i})$ (constraints related to $x_{-i}$). If we treat a team in an adversarial team game as a meta player, the formulation for Rosen-NE is: the team's strategy is $\arg \max_{x_T} u_T(x_T,x_n)$, such that constraints in Eq.(2) hold. However, in our Definition 1 for CoE, our formulation only uses Eq.(2) (representing a CE for the team) and does not have the operator $\max_{x_T} u_T(x_T,x_n)$ (i.e., we do not need to solve the problem $\max_{x_T} u_T(x_T,x_n)$). In other words, the definition of Rosen-NE has an additional maximizing operator for the team, which means that it should be harder to compute a Rosen-NE than to compute a CoE. Then, we cannot obtain the complexity result for computing a CoE directly from the results for concave/polyhedral games, where computing a Rosen-NE is PPAD-complete. Surprisingly, even though our CoE seems easier to compute, computing a CoE is still PPAD-complete. 

\subsection{Complexity} 
As discussed in previous sections, our complexity results in  general games are based on the results of computing a (or optimal) NE \cite{chen2006settling,daskalakis2009complexity,gilboa1989nash,conitzer2003complexity}. However, our complexity results  are not a straightforward application of these results, where we need to reduce each two-player game to a three-player adversarial team game, and  we need to reconstruct a CoE in this adversarial team game from an NE in this two-player game and vice versa.    Note that we cannot do this if team players have the same utility function. 
In addition, our CoE is fundamentally different from NE: 1) team players correlate their actions in CoE instead of playing independently in NE; and 2) we   identify some   games presented in Section \ref{section_consistent}, where, different from NE, we show that  TMCoEs are
exchangeable, and our TMCoE can be efficiently computed.
Moreover, even though a CTME can be computed by a linear program \cite{basilico2017team},   it ignores the difference between the utilities of team players. Furthermore, computing an approximate NE in  adversarial team games with the assumptions that team players have the identical utility and the expected utility can be computed polynomially with the number of players   is CLS-complete \cite{Anagnostides2023Algorithms}, however, these assumptions do not hold in our settings.
Therefore, our   complexity results cannot be obtained from the literature on adversarial team games. More importantly, we provide a new  solution concept with solid theoretical results for it, which is fundamentally different from the works studying complexity for an existing solution concept in complexity literature, e.g.,  \cite{chen2006settling,daskalakis2009complexity,gilboa1989nash,conitzer2003complexity,Anagnostides2023Algorithms}.


\subsection{Modeling Real-World Scenarios} Our problem is motivated  by the United Nations’ 2030 Agenda for Sustainable Development \cite{UN2023}. To use our CoE to model the United Nations' initiatives, we first need to model the utility function. For example, about phasing out fossil fuels, the time to phase out fossil fuels could be the actions of each country, and the possible resource levels on earth could be actions of the adversary. Then, with some simulations based on the data we have, given the actions of all players in this game, we could obtain the utility function for each player in this game. Then, we can obtain a correlated solution for the UN to fight against the worst-case situation on earth by computing a CoE for this game.

There are also many other real-world scenarios, which can be modeled by our adversarial team games. For example, 1) in the cooperation-competition markets,   some companies with inconsistent utility functions form a team to combat against new competitors; 2) in security games, different security departments play against the adversary or attacker at the New York port \cite{jiang2013defender,song2023multi}; 3) in interdependent defense games \cite{chan2012interdependent}, individual companies invest in protection from being attacked by terrorists in the case of airline baggage security  or hackers in the cases of cybersecurity; and 4)
in urban network security games, several police officers try to interdict the adversary  on the urban network \cite{Jain11,Jain2013,zhang2017optimal,zhang2019optimal,li2024grasper,zhuang2025solving}.
Existing models for these real-world scenarios    use Nash equilibrium to model the interaction among defenders \cite{chan2012interdependent,song2023multi} or assume that defenders have the same utility function \cite{jiang2013defender,Jain11,Jain2013,zhang2017optimal,zhang2019optimal,li2023solving}. As discussed in Sections \ref{section:cost} and \ref{sec_coopetition}, CoE will be a better solution concept for these scenarios. 

\subsection{Extension to Multi-Team Games}
In our formulation, the adversary includes only one player, but this setting   is realistic and important, which gives us some important results, elegant formulations, easy extension to a multi-team setting, and significant implications for studying more complicated scenarios. First, there are also many other real-world scenarios in which there is only one adversary, as shown in the above discussion.
 In addition, there are  many adversarial team games in AI literature, where the adversary is a single player \cite{von1997team,basilico2017team,celli2018computational,zhang2020computing,farina2021connecting,zhang2022correlation,zhang2022subgame}. Second, our results still hold when the  adversary is a team of opponents with   consistent constraints in Eq.(\ref{eqconsistentconstraint}) and then can be treated as a single player.
Third, our results can be extended to  multi-team games, where there are multiple teams, each of which plays a CE among themselves  and a ``best-response" against the other teams. Specifically, if   consistent constraints in Eq.(\ref{eqconsistentconstraint}) hold, then all of our results in our paper still hold because the team can be treated as a single player. However, if   consistent constraints in Eq.(\ref{eqconsistentconstraint}) do not hold, then the CoE may not exist. 
The reason is that when the team and an individual team player do not have consistent utilities, the team's best response to other teams' strategies may not be the individual team player's best response. Therefore, we could study which solution concept is better for this more complicated case, and our current result will be a very good starting point. 
Therefore, the work based on our position of considering the  difference in utility functions of team players will significantly contribute to the equilibrium computation in   general games.
\subsection{More Opportunities for Developing Algorithms}
Theorem \ref{theorem_polynomial} shows that the utility functions of team players in adversarial team games  do not have to be identical to guarantee that they are solved efficiently.
Note that, in these zero-sum adversarial team games where   team players' utilities are consistent with the team’s   utility, if players play independently,  
team-maximizing NEs (i.e., TME) are not exchangeable, and computing a team-maximizing NE is not efficient, according to   results on adversarial team games with the same utility for team players \cite{von1997team,basilico2017team}. These results show that CoE is fundamentally different from NE.  In addition, to the best of our knowledge, there are no Multi-Agent Reinforcement Learning (MARL) algorithms that guarantee to compute a CoE in general adversarial team games:  1) there are some MARL algorithms addressing the problem with different utility functions for players by aiming at commuting a Nash equilibrium, e.g., Nash Q-Learning \cite{hu2003nash}; and
2) in cooperative MARL algorithms, they only focus on practical efficiency by using parameter sharing \cite{lowe2017multi,yu2022surprising}   (i.e., only consider homogeneous agents) or only consider Nash equilibrium for heterogeneous agents \cite{zhong2024heterogeneous} with different action spaces for agents  (their setting only focuses on a joint reward function for agents as well).

In the future,   there are several directions for developing algorithms.
 First, we could study how to design more efficient algorithms to compute a TMCoE. In adversarial games in which we need to solve a nonlinear program to compute a TMCoE, we could study how to relax this nonlinear program to approximate the equilibrium or deploy the stochastic optimization techniques from machine learning \cite{mengreducing2025,zhuang2025tree,gempapproximating2024,zeng2024computing}. In the adversarial games in which a TMCoE can be efficiently computed, we could study how to use the current MARL framework, e.g., PSRO \cite{lanctot2017unified}, to solve large-scale real-world games for a TMCoE.
 Second, we could study how to extend our solution concepts to other realistic scenarios and then develop the corresponding efficient algorithms. For example, we could study which solution concept is suitable for multi-team games in which consistent constraints do not hold. In addition, we could study the setting in which team players having different utility functions play independently against the adversary. It is interesting to study if there are better solution concepts to capture this setting than NE.  
 Third, we could identify more adversarial team games in which a TMCoE can be computed efficiently. Currently, we only identify that adversarial team games with consistent constraints ensure that a TMCoE can be efficiently computed. It is interesting to understand if we could relax these consistent constraints. 
 Finally, we could study how to extend our results to extensive-form games, e.g., how to exploit the extensive-form structure   to obtain a CoE or TMCoE. Even though each extensive-form game can be modeled as a normal-form game by using the exponential-sized normal-form strategies, it is interesting to study if we could exploit its special structure. 

\nop{
\textbf{MARL} Our paper focuses on solution concepts in adversarial team games, and we discussed all related solution concepts, e.g., we discussed Nash equilibrium for our scenario as well.
In general games, players have different utility functions. Nash equilibrium is a solution concept commonly used to tackle situations where players have different utility functions. So, there are some MARL algorithms addressing the problem with different utility functions for players by aiming at commuting a Nash equilibrium, e.g., Nash Q-Learning \cite{hu2003nash}.
In cooperative MARL algorithms, they only focus on practical efficiency by using parameter sharing \cite{lowe2017multi,yu2022surprising} [2,3] (i.e., only consider homogeneous agents) or only consider Nash equilibrium for heterogeneous agents \cite{zhong2024heterogeneous}(different action spaces for agents, and their setting only focuses on a joint reward function for agents as well).
Therefore, to the best of our knowledge, there are no RL algorithms that can be used to compute a CoE in general adversarial team games. How to use MARL algorithms (e.g., the PSRO framework (Lanctot et al., 2017)) to approximate CoE will be the future work, as we mentioned in Section 6.
Additional References:

[1] Hu, J. and Wellman, M.P., 2003. Nash Q-learning for general-sum stochastic games. Journal of machine learning research, 4(Nov), pp.1039-1069.

[2] Zhong, Y., Kuba, J.G., Feng, X., Hu, S., Ji, J. and Yang, Y., 2024. Heterogeneous-agent reinforcement learning. Journal of Machine Learning Research, 25(32), pp.1-67.

[3] Lowe, R., Wu, Y.I., Tamar, A., Harb, J., Pieter Abbeel, O. and Mordatch, I., 2017. Multi-agent actor-critic for mixed cooperative-competitive environments. Advances in neural information processing systems, 30.

[4] Yu, C., Velu, A., Vinitsky, E., Gao, J., Wang, Y., Bayen, A. and Wu, Y., 2022. The surprising effectiveness of PPO in cooperative multi-agent games. Advances in neural information processing systems, 35, pp.24611-24624.

[5] Song, Zimeng, Chun Kai Ling, and Fei Fang. "Multi-defender Security Games with Schedules." In International Conference on Decision and Game Theory for Security, pp. 65-85. Cham: Springer Nature Switzerland, 2023.
}
 \section{Conclusion}
Motivated by the United Nations’ 2030 Agenda for Sustainable Development and the fact that countries do not always have the same utility function, we argue that studying adversarial team games should not ignore the difference in the utility functions of team players. We show that ignoring the difference in utilities of team players could cause the computed equilibrium to be unstable. We show the advantage of considering the  difference in utility functions of team players and its opportunities for theoretical and algorithmic contributions.  by   introducing a novel soluction concept CoE, which overcomes the issue caused by ignoring the difference in utilities of team players. We show that there always exists a CoE in any adversarial team game and that  the problem of finding a CoE is PPAD-complete. We also introduce the TMCoE, which is a
CoE but maximizes the team’s utility.  We show that computing a TMCoE is NP-hard in general adversarial team games, but there is a polynomial-time algorithm to compute a team-maximizing co-opetition equilibrium in zero-sum adversarial team games where   team players' utilities   are consistent with the team’s   utility. 
We further show the opportunities for theoretical and algorithmic contributions. 
We believe the work based on our position of considering the  difference in utility functions of team players will significantly contribute to the equilibrium computation in   general games.
\bibliographystyle{plain} 
\bibliography{TmeEfg}


\newpage

\appendix
\onecolumn
{\large\center \textbf{Appendix}}

\section{Proofs}\label{appendix_proofs}
\setcounter{Theorem}{2}
\setcounter{Corollary}{2}
\nop{
\begin{Theorem}  For $\mathbf{x}_T\in \mathbf{X}_T$ and $x_n\in X_n$, 
if  $(\mathbf{x}_T,x_n)$ is an NE in an adversarial team game, then $(x_T,x_n)$ with $x_T(\mathbf{a}_T)=\prod_{i\in T,a_i\in\mathbf{a}_T}x_i(a_i)$ for each $\mathbf{a}_T\in \mathbf{A}_T$ is a CoE.
\end{Theorem} 
 \begin{proof}
 By the definition of NE, $x_n$ is a best response to $x_T$. For each $i\in T$, $a_i\in A_i$, $x_i\in \mathbf{x}_T$, we consider: 1) If $x_i(a_i)=0$, we have: for each $a'_i\in A_i$,
 \begin{equation*}  
 \begin{split}
     &0=x_i(a_i)u_i(a_i,\mathbf{x}_{-i})\geq  x_i(a_i)u_i(a'_i,\mathbf{x}_{-i})=0\\
   \Rightarrow &x_i(a_i)                        \sum_{\mathbf{a}_{-(i,n)}\in \mathbf{A}_{-(i,n)}}\prod_{j\in -(i,n)}       x_j(a_j)u_i(a_i,\mathbf{a}_{-(i,n)},x_n) \\
   &\geq  x_i(a_i)                        \sum_{\mathbf{a}_{-(i,n)}\in \mathbf{A}_{-(i,n)}}\prod_{j\in -(i,n)}       x_j(a_j)u_i(a'_i,\mathbf{a}_{-(i,n)},x_n)\\
   \Rightarrow & \sum_{\mathbf{a}_{-(i,n)}\in \mathbf{A}_{-(i,n)}}x_T(a_i,\mathbf{a}_{-(i,n)})u_i(a_i,\mathbf{a}_{-(i,n)},x_n) \\
   &\geq \sum_{\mathbf{a}_{-(i,n)}\in \mathbf{A}_{-(i,n)}}x_T(a_i,\mathbf{a}_{-(i,n)})u_i(a'_i,\mathbf{a}_{-(i,n)},x_n).
 \end{split}
 \end{equation*}
2) If  $x_i(a_i)>0$, due to the property of   NE \cite{sandholm2005mixed}, we have:  for each $a'_i\in A_i$,
 \begin{equation*}  
 \begin{split}
 & u_i(\mathbf{x})=u_i(a_i,\mathbf{x}_{-i})\geq   u_i(a'_i,\mathbf{x}_{-i})\\
\Rightarrow     &x_i(a_i)u_i(a_i,\mathbf{x}_{-i})\geq  x_i(a_i)u_i(a'_i,\mathbf{x}_{-i})\\
   \Rightarrow & \sum_{\mathbf{a}_{-(i,n)}\in \mathbf{A}_{-(i,n)}}x_T(a_i,\mathbf{a}_{-(i,n)})u_i(a_i,\mathbf{a}_{-(i,n)},x_n)\\ 
   &\geq \sum_{\mathbf{a}_{-(i,n)}\in \mathbf{A}_{-(i,n)}}x_T(a_i,\mathbf{a}_{-(i,n)})u_i(a'_i,\mathbf{a}_{-(i,n)},x_n).
 \end{split}
 \end{equation*}
 Therefore, given $x_n$, $x_T$ is a team's strategy satisfying   correlated constraints shown in Eq.(\ref{GCTMEConstraint}), then $(x_T,x_n)$   is a CoE.
 \end{proof}
 Each NE induces a CoE and NE always exists \cite{nash1951non}, then   CoE always exists.
\begin{Corollary}
There  always   exists a CoE in any adversarial team game.
\end{Corollary}

\begin{Theorem} 
  For any two-player game $G_2=(\{2,3\},A_2\times A_3,(u_2,u_3))$, there is  a three-player zero-sum adversarial team game $G_T=(\{1,2,3\}$, $A_1\times A_2\times A_3,(u_1,u_2,u_3))$ with $T=\{1,2\}$ and $A_1=\{a_1\}$ such that  $(x_2,x_3)$ 
 is an NE in $G_2$ 
 if and only if  $(x_T,x_3)$ with $x_T(a_1,a_2)= x_2(a_2)$ for each $a_2\in A_2$ is a CoE in $G_T$.
\end{Theorem}
\begin{proof}
    For any two-player game $G_2=(\{2,3\},A_2\times A_3,(u_2,u_3))$, we construct a three-player zero-sum adversarial team game $G_T=(\{1,2,3\},A_1\times A_2\times A_3,(u_1,u_2,u_3))$ such that the team is $T=\{1,2\}$, the adversary is player $3$, the action space and the utility function for player 2 or player 3   in both $G_2$ and $G_T$  are the same, respectively,   $A_1=\{a_1\}$,  and    $u_1=-(u_2+u_3)$. That is, we add a dummy team player with only one action, whose     utility is the negative value of the sum of other players' utilities:  $\forall a_2\in A_2,a_3\in A_3, u_2(a_1,a_2,a_3)=u_2(a_2,a_3), u_3(a_1,a_2,a_3)=u_3(a_2,a_3),   u_1(a_1,a_2,a_3)=-(u_2(a_1,a_2,a_3)+u_3(a_1,a_2,a_3)).$ 

Now, we show that, $(x_2,x_3)$ is an NE in $G_2$ if and only if $(x_T,x_3)$ with $x_T(a_1,a_2)=x_2(a_2)$ for each $a_2\in A_2$ is a CoE in $G_T$.
If $(x_T,x_3)$ is a CoE in $G_T$, then $x_3$ is a best response to $x_2$ because: for each $x'_3\in X'_3$, $ u_3(x_T,x_3)\geq u_3(x_T,x'_3) $ implies $ u_3(x_2,x_3)\geq u_3(x_2,x'_3).$
If $(x_T,x_3)$ is a CoE in $G_T$, then $x_2$ is a best response to $x_3$ because: $ x_T(a_1,a_2)(u_2(a_1,a_2,x_3)
     -u_2(a_1,a'_2,x_3))\geq 0,   \forall a_2,a'_2\in A_2   $, then we have:
\begin{equation*} 
\begin{split} 
       &x_2( a_2)(u_2(a_2,x_3)-u_2(a'_2,x_3))\geq 0,   \forall  a_2, a'_2  \\
     \Rightarrow & \sum_{a_2\in A_2} x_2( a_2)u_2(a_2,x_3)
  \geq \sum_{a_2\in A_2} x_2( a_2)u_2(a'_2,x_3),  \forall a'_2\in A_2  \\
    \Rightarrow & \sum_{a_2\in A_2} x_2( a_2)u_2(a_2,x_3)
  \geq   u_2(a'_2,x_3),  \forall a'_2\in A_2 \\
   \Rightarrow & \sum_{a'_2\in A_2} x'_2( a'_2)    \sum_{a_2\in A_2} x_2( a_2)u_2(a_2,x_3)\geq          \sum_{a'_2\in A_2} x'_2( a'_2)u_2(a'_2,x_3)   \\
    \Rightarrow &  \sum_{a_2\in A_2} x_2( a_2)u_2(a_2,x_3)\geq     \sum_{a_2\in A_2} x'_2( a_2)u_2(a_2,x_3), \forall x'_2\in X_2.
 \end{split}
 \end{equation*}
 Therefore, $(x_2,x_3)$ 
 is an NE in $G_2$. 
 Similarly, if $(x_2,x_3)$ is an NE in $G_2$, then $(a_1,x_2,x_3)$ is an NE in $G_T$. By Theorem \ref{theorem_nash_correlated_nash},  $(x_T,x_3)$ is a CoE in $G_T$. 
\end{proof}
 
\begin{Corollary} 
The problem of finding   a CoE is PPAD-complete, even in three-player zero-sum adversarial team games.
\end{Corollary}
\begin{proof}
The problem of finding an NE is PPAD-complete in   two-player games  \cite{chen2006settling,daskalakis2009complexity}. %
By Theorem \ref{reductionfromzpto30p}, each two-player game can be reduced into a three-player zero-sum adversarial team game. Therefore,  the problem of finding   a CoE is PPAD-complete in three-player zero-sum adversarial team games. 
\end{proof}
}

\begin{Theorem}
Solving Program (\ref{GCTMENLP}) obtains a CoE that maximizes the team's utility.
\end{Theorem}
\begin{proof}
Eq.(\ref{GCTME_minimize1})   makes sure that $x_n$ is a best response to $x_T$, i.e.,
\begin{equation*} 
\begin{split}
     & u_n(x_T,x_n)\geq u_n(x_T,a_n), \forall a_n\in A_n \\
    \Rightarrow&  u_n(x_T,x_n)\geq \sum_{a_n}x'_n(a_n)u_n(x_T,a_n), \forall x'_n\in X_n\\
     \Rightarrow& u_n(x_T,x_n)\geq  u_n(x_T,x'_n), \forall x'_n\in X_n.
\end{split}
\end{equation*}
That is, $x_n$ is a best response to $x_T$. Given $x_n$, $x_T$ is a correlated equilibrium according to Eq.(\ref{GCTME_ce}). Then Eqs.(\ref{GCTME_minimize1})--(\ref{GCTME_ce}) define the space of CoEs. Therefore, solving Program (\ref{GCTMENLP}) obtains a CoE that maximizes the team's utility.
\end{proof}

\begin{Corollary} 
Computing a TMCoE is NP-hard, even for zero-sum three-player adversarial team games.
\end{Corollary}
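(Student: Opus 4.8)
The plan is to combine the reduction of Theorem~\ref{reductionfromzpto30p} with an NP-hardness result for a suitable ``team-optimal'' equilibrium problem in two-player games. The crucial observation is that in the constructed zero-sum game $G_T$ the team's utility collapses to (the negative of) a single player's payoff: since $u_1 = -(u_2+u_3)$, we have $u_T = u_1 + u_2 = -u_3$. Hence maximizing the team's utility over all CoEs of $G_T$ is, by Theorem~\ref{reductionfromzpto30p}, exactly the problem of selecting, among all NEs $(x_2,x_3)$ of $G_2$, one that \emph{minimizes} player $3$'s expected utility $u_3(x_2,x_3)$. Computing a TMCoE of $G_T$ therefore reduces to computing an NE of a two-player game that minimizes one fixed player's payoff, so it suffices to show the latter is NP-hard.

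To establish NP-hardness of this minimization problem, I would start from the symmetric two-player gadget that \cite{conitzer2003complexity} builds from a CNF formula $\phi$, but modify its payoffs so that the \emph{fallback} equilibrium is the expensive one rather than the cheap one. Concretely, the action set is $L \cup V \cup C \cup \{f\}$ (literals, variables, clauses, and a fallback action $f$), and the payoff constants are tuned so that: (i) every satisfying assignment of $\phi$ induces a symmetric NE in which each player mixes uniformly over the $k$ corresponding literals and earns payoff $1$; (ii) the profile $(f,f)$ is always an NE earning each player payoff $2$; and (iii) these are the \emph{only} NEs. Then the minimum value of player $3$'s payoff over all NEs equals $1$ when $\phi$ is satisfiable and $2$ otherwise, so an oracle for the payoff-minimizing NE decides satisfiability.

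The main obstacle is step (iii): one must verify that the revised gadget admits no spurious equilibria, in particular that unsatisfiability genuinely forces the higher-payoff fallback $(f,f)$ and that no asymmetric or partially-mixed profile yields player $3$ a payoff below $1$. This is where the payoff entries---the penalty for playing a literal together with its negation, the $2-k$ terms that make deviating to a variable or to an unsatisfied clause attractive, and the property that $f$ becomes a strict best response once the opponent plays it---must be chosen and checked so that the only surviving equilibria are the literal assignments and $(f,f)$; this is precisely the part of the Conitzer--Sandholm argument that changes under minimization and must be reworked rather than quoted verbatim. Once (i)--(iii) are in place, chaining the two reductions shows that computing a TMCoE is NP-hard even in zero-sum three-player adversarial team games, as claimed.
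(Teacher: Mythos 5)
Your proposal is correct and follows essentially the same route as the paper's own proof: the reduction of Theorem~\ref{reductionfromzpto30p} together with the observation that $u_T=-u_3$ turns TMCoE computation into finding an NE of the two-player game that minimizes one player's payoff, and NP-hardness of that problem is established by re-tuning the Conitzer--Sandholm SAT gadget so that the satisfying-assignment equilibria pay $1$ while the fallback $(f,f)$ equilibrium pays $2$ and no other equilibria exist. The paper carries out in full the uniqueness verification (your step (iii)) that you correctly flag as the part requiring rework, using exactly the payoff features you describe.
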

\begin{proof}
It has been shown that computing an NE maximizing the sum of players' utilities in two-player games is NP-hard \cite{conitzer2003complexity}.   To show that computing a TMCoE is NP-hard, we reduce each two-player game into a three-player  adversarial team game. Then, exploiting the complexity result in \cite{conitzer2003complexity}, we can     show that computing a TMCoE is NP-hard in three-player adversarial team games.

Formally, for any two-player game $G_2=(\{2,3\},A_2\times A_3,(u_2,u_3))$, we construct a three-player zero-sum adversarial team game $G_T=(\{1,2,3\},A_1\times A_2\times A_3,(u_1,u_2,u_3))$ such that the team is $T=\{1,2\}$, the adversary is player $3$, the action space and utility function for player 2 or player 3 are the same in both $G_2$ and $G_T$,  respectively, $A_1=\{a_1\}$,  and    $u_1=-(u_2+u_3)$. That is, we add a dummy team player with only one action and the utility is the negative value of the sum of other players' utilities:  $\forall a_2\in A_2,a_3\in A_3, u_2(a_1,a_2,a_3)=u_2(a_2,a_3), u_3(a_1,a_2,a_3)=u_3(a_2,a_3),   u_1(a_1,a_2,a_3)=-(u_2(a_1,a_2,a_3)+u_3(a_1,a_2,a_3)).$ 

We can verify that, $(x_2,x_3)$ is an NE in $G_2$ if and only if $(x_T,x_3)$ with $x_T(a_1,a_2)=x_2(a_2)$ for each $a_2\in A_2$ is aCoE in $G_T$ (see  Theorem \ref{reductionfromzpto30p}).


Then, we can   use the complexity result in \cite{conitzer2003complexity}  to show that computing a TMCoE is NP-hard in three-player adversarial team games. However, to show that computing a TMCoE is NP-hard in `zero-sum' three-player adversarial team games, we need to revise the proof in \cite{conitzer2003complexity}.  In the above reduction from a two-player game to a three-player adversarial team game, to obtain a zero-sum three-player adversarial team game, we have to make sure that the  dummy team player's utility    is the negative value of the sum of other players' utilities. 
 Then, maximizing the sum of utilities of players $2$ and $3$ may not maximize the team's utilities. In fact, to maximize the team's utilities, we need to minimize the utility of player $3$ in $G_T$. 
  If $G_2$ is a symmetric game, we still need to minimize the sum of utilities of players $2$ and $3$ in $G_2$.  In  \cite{conitzer2003complexity}, the reduction constructs a two-player symmetric game from each Boolean formula in the conjunctive normal form $\phi$, which shows that there is  a  Nash  equilibrium  in that game with utility  1  for each player if  and  only  if $\phi$ is satisfiable; otherwise, the only equilibrium makes each player get 0. Obviously,  this utility setting is not suitable for showing  that computing an NE minimizing the player's utility is NP-hard. Therefore, we need to revise the proof in \cite{conitzer2003complexity} to show that computing an NE minimizing the player's utility is NP-hard.

We construct a two-player symmetric game from each Boolean formula in the conjunctive normal form $\phi$ (e.g., $(z_1\wedge z_3\wedge \neg z_5)\cup(z_1\wedge \neg z_3\wedge   z_5)$). $V=\{y_1,\dots,y_k\}$ is a set of variables in $\phi$ with $|V|=k$,  $L=\{z_1,\neg z_1,\dots,z_k,\neg z_k\}$ is the set of literals (each variable corresponds to a positive literal and a negative literal), and $C$ is the set of clauses (e.g., $C=\{(z_1\wedge z_3\wedge \neg z_5), (z_1\wedge \neg z_3\wedge   z_5)\}$). $v:L\rightarrow V$ is a function mapping each literal to its corresponding variable, i.e., $v(z_1)=v(\neg z_1)=y_1$. $G(\phi)$ is a symmetric two-player normal-form game such that $A'=A_1=A_2=L\cup V\cup C\cup \{f\}$ with the following utility functions: \begin{align*}
    & u_1(l_1,l_2)=u_2(l_2,l_1)=1 \quad \forall l_1,l_2 \in L, s.t. l_1\neq l_2;\\
    &u_1(l,\neg l)=u_2(\neg l,l) = -2 \quad \forall l\in L;\\
    &u_1(l,x)=u_2(x,l)=-2 \quad \forall l\in L, x\in A'\setminus L;\\
    &u_1(y,l)=u_2(l,y) =2 \quad \forall y\in V,l\in L s.t. v(l)\neq y;\\
     &u_1(y,l)=u_2(l,y) =2-k \quad \forall y\in V,l\in L, s.t. v(l)= y;\\
      &u_1(y,x)=u_2(x,y) =-2 \quad \forall y\in V,x\in A'\setminus L;\\
        &u_1(c,l)=u_2(l,c) =2 \quad \forall c\in C,l\in L, s.t. l\notin c;\\
   &u_1(c,l)=u_2(l,c) =2-k \quad \forall c\in C,l\in L, s.t. l\in c;\\
      &u_1(c,x)=u_2(x,c) =-2 \quad \forall c\in C,x\in A'\setminus L;\\
     & u_1(f,f)=u_2(f,f)=2;\\
    &  u_1(f,x)=u_2(x,f)=1  \quad \forall  x\in A'\setminus \{f\}.
\end{align*}
Similar to the proof in \cite{conitzer2003complexity}, we show that if $(l_1,l_2,\dots,l_k)$ with $v(l_i)=y_i$ satisfies $\phi$, then there is an NE in $G(\phi)$ where each player plays $l_i$ with probability $\frac{1}{k}$ with expected utility 1, and vice versa. The only other Nash equilibrium is  the strategy profile that each player plays $f$ with utility $2$ for each player. 
Obviously, this strategy profile that each player plays $f$ is always an NE.

Now, we verify that the strategy profile where each player plays each $l_i$ with probability $\frac{1}{k}$ is an NE.
If $(l_1,l_2,\dots,l_k)$ with $v(l_i)=y_i$ satisfies $\phi$, we show that the strategy profile where each player plays $l_i$ with probability $\frac{1}{k}$ is an NE.
We need to show: if one player plays this strategy,  playing this strategy (i.e., playing each $l_i$ with probability $\frac{1}{n}$)  is a best response of the other player. We can check it one by one: 1) playing any $l_i$: the utility is 1; 2) playing any negation of $l_i$: the utility is $\frac{-2}{k}+\frac{k-1}{k}<1$; 3) playing any $y\in V$: the utility is $\frac{2-k}{k}+\frac{2k-2}{k}=1$ because there is $l_i$ such that $v(l_i)=y$; 4) playing any $c\in C$:  the utility is at most $\frac{2-k}{k}+\frac{2k-2}{k}=1$ because there is at least one literal $l_i\in c$; and 5) playing $f$: the utility is 1. Then, if one player plays this strategy,  playing this strategy (i.e., playing each $l_i$ with probability $\frac{1}{k}$) is a best response for the other player.

Now, we show that there are no more NEs.
If one player plays $f$, then the only best response of the other player is playing $f$. Then there are no NEs such that only one player always plays $f$. Suppose   players both play $f$ with the probability that is less than one. Let us consider  the expected social welfare (the sum of each player's expected utility) with that   players do not play $f$. It is clear that there is no outcome with this social welfare greater than 2 (e.g., if $u_1(y,l)=2$, $u_2(y,l)=-2$). Moreover, any outcome involving  an element of $V$ or $C$ has  social welfare strictly less than 2. Then, if any player plays an element of $V$ and $C$, the expected social welfare is strictly below 2, which means that at least one player's expected utility is strictly less than 1, given that   players do not play $f$. Then, this player will deviate from it to play $f$. Therefore, in any Nash equilibrium, no element in $V$ or $C$ is played.

Now, we can assume players only play elements in $L\cup \{f\}$ with positive probability. If one player plays $f$ with positive probability, then playing $f$ is the strict best response for the other player. Then, if $f$ is played in an NE, $(f,f)$ is that Nash equilibrium.

Now, we assume there is an NE where both players only play elements in $L$ with positive probability. If there is $l\in L$ such that $l$ or $\neg l$  is played with probability less than $\frac{1}{k}$, the playing $v(l)$ for the other player obtains the utility more than $\frac{2-k}{k}+\frac{2k-2}{k}=1$, and then this strategy profile cannot be an NE. Then if $l\in L$ or $\neg l$ is played, the probability is $\frac{1}{k}$.

If there is $l\in L$ that player 1 plays it with positive probability but player 2 plays its negation, each player will have the expected utility less than 1, which is dominated by playing $f$. Then if $l_i$ is played by any player, both players will play it with the probability $\frac{1}{k}$. We then can assume that exactly one of each variable's corresponding literals is played with the probability $\frac{1}{k}$ by both players.
Then, in any Nash equilibrium, if literals are played with positive probability, they correspond to an assignment to the variables.

If $\phi$ is not satisfied, then there is a clause $c\in C$, which is not satisfied, i.e., none of the literals in $c$ is played. Then, playing $c$ will be better than playing any $l_i$ for both players. Then the strategy profile playing $l_i$ with $\frac{1}{k}$ is not an NE.

Therefore, there exists an NE in $G(\phi)$ with the utility 1 for each player if and only if $\phi$ is satisfiable; otherwise, the only equilibrium gives the utility 2 for each player. Then, finding an NE minimizing a player's utility is NP-hard in two-player games, and then finding  a TMCoE is NP-hard in zero-sum three-player adversarial team games.
\end{proof}
\nop{
\begin{proof}
It has been shown that computing an NE maximizing the sum of players' utilities in two-player games is NP-hard \cite{conitzer2003complexity}.   To show that computing a TMCoE is NP-hard, we reduce each two-player game into a three-player  adversarial team game. Then, exploiting the complexity result in \cite{conitzer2003complexity}, we can     show that computing a TMCoE is NP-hard in three-player adversarial team games.

Formally, for any two-player game $G_2=(\{2,3\},A_2\times A_3,(u_2,u_3))$, we construct a three-player zero-sum adversarial team game $G_T=(\{1,2,3\},A_1\times A_2\times A_3,(u_1,u_2,u_3))$ such that the team is $T=\{1,2\}$, the adversary is player $3$, the action space and utility function for player 2 or player 3 are the same in both $G_2$ and $G_T$,  respectively, $A_1=\{a_1\}$,  and    $u_1=-(u_2+u_3)$. That is, we add a dummy team player with only one action and the utility is the negative value of the sum of other players' utilities:  $\forall a_2\in A_2,a_3\in A_3, u_2(a_1,a_2,a_3)=u_2(a_2,a_3), u_3(a_1,a_2,a_3)=u_3(a_2,a_3),$ $   u_1(a_1,a_2,$ $a_3)$ $=-(u_2(a_1,a_2,a_3)+u_3(a_1,a_2,a_3)).$ 

We can verify that, 
$(x_2,x_3)$ is an NE in $G_2$ if and only if $(x_T,x_3)$ with $x_T(a_1,a_2)=x_2(a_2)$ for each $a_2\in A_2$ is aCoE in $G_T$ (see Theorem \ref{reductionfromzpto30p}).


Then, we can   use the complexity result in \cite{conitzer2003complexity}  to show that computing a TMCoE is NP-hard in three-player adversarial team games. However, to show that computing a TMCoE is NP-hard in `zero-sum' three-player adversarial team games, we need to revise the proof in \cite{conitzer2003complexity}.  In the above reduction from a two-player game to a three-player adversarial team game, to obtain a zero-sum three-player adversarial team game, we have to make sure that the  dummy team player's utility    is the negative value of the sum of other players' utilities. 
 Then, maximizing the sum of utilities of players $2$ and $3$ may not maximize the team's utilities. In fact, to maximize the team's utilities, we need to minimize the utility of player $3$ in $G_T$. 
  If $G_2$ is a symmetric game, we still need to minimize the sum of utilities of players $2$ and $3$ in $G_2$.  In  \cite{conitzer2003complexity}, the reduction constructs a two-player symmetric game from each Boolean formula in the conjunctive normal form $\phi$, which shows that there is  a  Nash  equilibrium  in that game with utility  1  for each player if  and  only  if $\phi$ is satisfiable; otherwise, the only equilibrium makes each player get 0. Obviously,  this utility setting is not suitable for showing  that computing an NE minimizing the player's utility is NP-hard. Therefore, we need to revise the proof in \cite{conitzer2003complexity} to show that computing an NE minimizing the player's utility is NP-hard. The complete proof is in the Appendix.
\nop{
We construct a two-player symmetric game from each Boolean formula in the conjunctive normal form $\phi$ (e.g., $(z_1\wedge z_3\wedge \neg z_5)\cup(z_1\wedge \neg z_3\wedge   z_5)$). $V=\{y_1,\dots,y_k\}$ is a set of variables in $\phi$ with $|V|=k$,  $L=\{z_1,\neg z_1,\dots,z_k,\neg z_k\}$ is the set of literals (each variable corresponds to a positive literal and a negative literal), and $C$ is the set of clauses (e.g., $C=\{(z_1\wedge z_3\wedge \neg z_5), (z_1\wedge \neg z_3\wedge   z_5)\}$). $v:L\rightarrow V$ is a function mapping each literal to its corresponding variable, i.e., $v(z_1)=v(\neg z_1)=y_1$. $G(\phi)$ is a symmetric two-player normal-form game such that $A'=A_1=A_2=L\cup V\cup C\cup \{f\}$ with the following utility functions: \begin{align*}
    & u_1(l_1,l_2)=u_2(l_2,l_1)=1 \quad \forall l_1,l_2 \in L, s.t. l_1\neq l_2;\\
    &u_1(l,\neg l)=u_2(\neg l,l) = -2 \quad \forall l\in L;\\
    &u_1(l,x)=u_2(x,l)=-2 \quad \forall l\in L, x\in A'\setminus L;\\
    &u_1(y,l)=u_2(l,y) =2 \quad \forall y\in V,l\in L s.t. v(l)\neq y;\\
     &u_1(y,l)=u_2(l,y) =2-k \quad \forall y\in V,l\in L, s.t. v(l)= y;\\
      &u_1(y,x)=u_2(x,y) =-2 \quad \forall y\in V,x\in A'\setminus L;\\
        &u_1(c,l)=u_2(l,c) =2 \quad \forall c\in C,l\in L, s.t. l\notin c;\\
   &u_1(c,l)=u_2(l,c) =2-k \quad \forall c\in C,l\in L, s.t. l\in c;\\
      &u_1(c,x)=u_2(x,c) =-2 \quad \forall c\in C,x\in A'\setminus L;\\
     & u_1(f,f)=u_2(f,f)=2;\\
    &  u_1(f,x)=u_2(x,f)=1  \quad \forall  x\in A'\setminus \{f\}.
\end{align*}
Similar to the proof in \cite{conitzer2003complexity}, we show that if $(l_1,l_2,\dots,l_k)$ with $v(l_i)=y_i$ satisfies $\phi$, then there is an NE in $G(\phi)$ where each player plays $l_i$ with probability $\frac{1}{k}$ with expected utility 1, and vice versa. The only other Nash equilibrium is  the strategy profile that each player plays $f$ with utility $2$ for each player. 
Obviously, this strategy profile that each player plays $f$ is always an NE.

Now, we verify that the strategy profile where each player plays each $l_i$ with probability $\frac{1}{k}$ is an NE.
If $(l_1,l_2,\dots,l_k)$ with $v(l_i)=y_i$ satisfies $\phi$, we show that the strategy profile where each player plays $l_i$ with probability $\frac{1}{k}$ is an NE.
We need to show: if one player plays this strategy,  playing this strategy (i.e., playing each $l_i$ with probability $\frac{1}{n}$)  is a best response of the other player. We can check it one by one: 1) playing any $l_i$: the utility is 1; 2) playing any negation of $l_i$: the utility is $\frac{-2}{k}+\frac{k-1}{k}<1$; 3) playing any $y\in V$: the utility is $\frac{2-k}{k}+\frac{2k-2}{k}=1$ because there is $l_i$ such that $v(l_i)=y$; 4) playing any $c\in C$:  the utility is at most $\frac{2-k}{k}+\frac{2k-2}{k}=1$ because there is at least one literal $l_i\in c$; and 5) playing $f$: the utility is 1. Then, if one player plays this strategy,  playing this strategy (i.e., playing each $l_i$ with probability $\frac{1}{k}$) is a best response for the other player.

Now, we show that there are no more NEs.
If one player plays $f$, then the only best response of the other player is playing $f$. Then there are no NEs such that only one player always plays $f$. Suppose   players both play $f$ with the probability that is less than one. Let us consider  the expected social welfare (the sum of each player's expected utility) with that   players do not play $f$. It is clear that there is no outcome with this social welfare greater than 2 (e.g., if $u_1(y,l)=2$, $u_2(y,l)=-2$). Moreover, any outcome involving  an element of $V$ or $C$ has  social welfare strictly less than 2. Then, if any player plays an element of $V$ and $C$, the expected social welfare is strictly below 2, which means that at least one player's expected utility is strictly less than 1, given that   players do not play $f$. Then, this player will deviate from it to play $f$. Therefore, in any Nash equilibrium, no element in $V$ or $C$ is played.

Now, we can assume players only play elements in $L\cup \{f\}$ with positive probability. If one player plays $f$ with positive probability, then playing $f$ is the strict best response for the other player. Then, if $f$ is played in an NE, $(f,f)$ is that Nash equilibrium.

Now, we assume there is an NE where both players only play elements in $L$ with positive probability. If there is $l\in L$ such that $l$ or $\neg l$  is played with probability less than $\frac{1}{k}$, the playing $v(l)$ for the other player obtains the utility more than $\frac{2-k}{k}+\frac{2k-2}{k}=1$, and then this strategy profile cannot be an NE. Then if $l\in L$ or $\neg l$ is played, the probability is $\frac{1}{k}$.

If there is $l\in L$ that player 1 plays it with positive probability but player 2 plays its negation, each player will have the expected utility less than 1, which is dominated by playing $f$. Then if $l_i$ is played by any player, both players will play it with the probability $\frac{1}{k}$. We then can assume that exactly one of each variable's corresponding literals is played with the probability $\frac{1}{k}$ by both players.
Then, in any Nash equilibrium, if literals are played with positive probability, they correspond to an assignment to the variables.

If $\phi$ is not satisfied, then there is a clause $c\in C$, which is not satisfied, i.e., none of the literals in $c$ is played. Then, playing $c$ will be better than playing any $l_i$ for both players. Then the strategy profile playing $l_i$ with $\frac{1}{k}$ is not an NE.

Therefore, there exists an NE in $G(\phi)$ with the utility 1 for each player if and only if $\phi$ is satisfiable; otherwise, the only equilibrium gives the utility 2 for each player. Then, finding an NE minimizing a player's utility is NP-hard in two-player games, and then finding  a TMCoE is NP-hard in zero-sum three-player adversarial team games.
}
\end{proof}
}

\begin{Theorem} 
In   zero-sum adversarial team games with   consistent constraints shown in Eq.(\ref{eqconsistentconstraint}), 
a TMCoE is computed by Program (\ref{ne_linear}).
\end{Theorem}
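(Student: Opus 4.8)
The plan is to exhibit Program (\ref{ne_linear}) as a relaxation of the TMCoE program (\ref{GCTMENLP}) whose optimal solutions happen to satisfy the dropped constraints, so that the two programs have identical optima. First I would note that in a zero-sum adversarial team game $u_n=-u_T$, whence the adversary's best-response constraint Eq.(\ref{GCTME_minimize1}) is literally equivalent to Eq.(\ref{specialne_minimize1}). Under this identification, Program (\ref{ne_linear}) is obtained from Program (\ref{GCTMENLP}) by deleting the correlated-equilibrium constraints Eq.(\ref{GCTME_ce}) while keeping the same objective $u_T(x_T,x_n)$ and the remaining constraints. Thus the feasible region of Program (\ref{ne_linear}) contains that of Program (\ref{GCTMENLP}), so its optimal value is at least as large; it remains only to show that an optimizer of the relaxation is feasible for the original.

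Next I would read Program (\ref{ne_linear}) as the maxmin problem $\max_{x_T\in X_T}\min_{x_n\in X_n}u_T(x_T,x_n)$, i.e.\ a two-player zero-sum Nash equilibrium between the adversary and the team viewed as a single player selecting a correlated profile $x_T\in\Delta(\mathbf{A}_T)$. The central step is to argue that any optimal $(x_T,x_n)$ already satisfies Eq.(\ref{GCTME_ce}). Since $x_T$ is a best response to $x_n$ for the single-player utility $u_T$, every pure profile in the support of $x_T$ attains the value $u_T(x_T,x_n)$, while every other pure profile attains at most this value. Hence for each $i\in T$, each $a_i,a'_i\in A_i$, and each $\mathbf{a}_{-(i,n)}$, the term $x_T(a_i,\mathbf{a}_{-(i,n)})\bigl(u_T(a_i,\mathbf{a}_{-(i,n)},x_n)-u_T(a'_i,\mathbf{a}_{-(i,n)},x_n)\bigr)$ is nonnegative (it is $0$ when the probability vanishes, and otherwise $(a_i,\mathbf{a}_{-(i,n)})$ attains the value so dominates $(a'_i,\mathbf{a}_{-(i,n)})$). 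Summing over $\mathbf{a}_{-(i,n)}$ yields the $u_T$-based correlated constraint
\begin{equation*}
\sum_{\mathbf{a}_{-(i,n)}\in\mathbf{A}_{-(i,n)}} x_T(a_i,\mathbf{a}_{-(i,n)})\bigl(u_T(a_i,\mathbf{a}_{-(i,n)},x_n)-u_T(a'_i,\mathbf{a}_{-(i,n)},x_n)\bigr)\ge 0.
\end{equation*}

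Finally I would invoke the consistency hypothesis Eq.(\ref{eqconsistentconstraint}): because the $u_T$-term on the left equals $k$ times the corresponding $u_i$-term for some $k>0$, the nonnegativity just derived transfers directly to each team player's correlated constraint Eq.(\ref{GCTME_ce}). Therefore the optimizer $(x_T,x_n)$ of Program (\ref{ne_linear}) is feasible for Program (\ref{GCTMENLP}); combined with the containment of feasible regions and the shared objective, both programs attain the same optimal value at the same point, so solving Program (\ref{ne_linear}) returns a TMCoE. I expect the main obstacle to be this central step: carefully deducing the $u_T$-based correlated constraint from the single-player best-response property of the team, and then using consistency to certify that the individual constraints Eq.(\ref{GCTME_ce}) are redundant at the optimum. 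The rest — the zero-sum identification of the two best-response constraints and the relaxation bookkeeping — is routine.
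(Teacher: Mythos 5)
Your proposal is correct and follows essentially the same route as the paper's own proof: treat Program (\ref{ne_linear}) as a relaxation of Program (\ref{GCTMENLP}) obtained by dropping Eq.(\ref{GCTME_ce}) (using the zero-sum equivalence of Eqs.(\ref{GCTME_minimize1}) and (\ref{specialne_minimize1})), observe that an optimizer is a two-player zero-sum NE so every pure team profile in the support of $x_T$ attains the value and hence the $u_T$-based correlated constraints hold, and then transfer these to the individual constraints via the consistency condition Eq.(\ref{eqconsistentconstraint}). No gaps; the argument matches the paper step for step.
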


\begin{proof}
Obviously, the space of $x_T$ in Eqs.(\ref{specialne_minimize1}) and (\ref{ne_1}) includes the space      of $x_T$ in Eqs.(\ref{GCTME_minimize1})--(\ref{GCTME_ce}) with additional constraints shown in Eq.(\ref{GCTME_ce}), and Eq.(\ref{GCTME_minimize1}) is equivalent to Eq.(\ref{specialne_minimize1}) in zero-sum games. Program (\ref{ne_linear}) as a maxmin program computes an NE for two-player zero-sum games.
If $(x_T,x_n)$  is an optimal solution in Program (\ref{ne_linear}), i.e., $(x_T,x_n)$ is an NE between the team and the adversary,  then for each $i\in T$, $a_i,a'_i\in A_i$, $\mathbf{a}_{-(i,n)}\in \mathbf{A}_{-(i,n)}$, we have: If $x_T(a_i,\mathbf{a}_{-(i,n)}) >0$, 
\begin{equation*} 
\begin{split} 
& u_T(x_T,x_n)=  u_T(a_i,\mathbf{a}_{-(i,n)},x_n) \geq  u_T(a'_i,\mathbf{a}_{-(i,n)},x_n) \\
  \Rightarrow &  x_T(a_i,\mathbf{a}_{-(i,n)})u_T(a_i,\mathbf{a}_{-(i,n)},x_n) \\
  &\geq 
 x_T(a_i,\mathbf{a}_{-(i,n)}) u_T(a'_i,\mathbf{a}_{-(i,n)},x_n). 
\end{split}
\end{equation*}
The above relation holds as well when    $x_T(a_i,\mathbf{a}_{-(i,n)}) =0$. Therefore, $\forall i\in T, a_i,a'_i\in A_i$,
\begin{equation*} 
       \sum_{\mathbf{a}_{-(i,n)}\in \mathbf{A}_{-(i,n)}}\!\!\!\!\!\!\!\! x_T(a_i,\mathbf{a}_{-(i,n)})(u_T(a_i,\mathbf{a}_{-(i,n)},x_n)    -
 u_T(a'_i,\mathbf{a}_{-(i,n)},x_n))\geq 0.
\end{equation*}
By Eq.(\ref{eqconsistentconstraint}), we have: $\forall i\in T, a_i,a'_i\in A_i$,
\begin{equation*} 
       \sum_{\mathbf{a}_{-(i,n)}\in \mathbf{A}_{-(i,n)}}\!\!\!\!\!\!\!x_T(a_i,\mathbf{a}_{-(i,n)})(u_i(a_i,\mathbf{a}_{-(i,n)},x_n)   -
 u_i(a'_i,\mathbf{a}_{-(i,n)},x_n))\geq 0.
\end{equation*}
$x_n$ is a best response to $x_T$ as well. Therefore, $(x_T,x_n)$ is in the space of Eqs.(\ref{GCTME_minimize1})--(\ref{GCTME_ce}). Then $(x_T,x_n)$  is an optimal solution of Program (\ref{GCTMENLP}), i.e., a TMCoE. That is, each optimal solution of Program (\ref{ne_linear}) is an optimal solution of Program (\ref{GCTMENLP}), and both programs have the same optimal objective value. 
\end{proof}

\begin{Theorem} 
In   zero-sum adversarial team games with   consistent constraints shown in Eq.(\ref{eqconsistentconstraint}), a TMCoE $(x_T,x_n)$ is   a two-player  Nash equilibrium (i.e., $x_T$   for the team as a single player and $x_n$   for the adversary),   and vice versa.
\end{Theorem}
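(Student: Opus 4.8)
The plan is to show that the TMCoE program, Program (\ref{GCTMENLP}), and the maxmin program, Program (\ref{ne_linear}), have exactly the same set of optimal solutions in these games, and that the optimizers of the latter are precisely the two-player zero-sum Nash equilibria between the team (viewed as a single player with joint action set $\mathbf{A}_T$ and mixed strategy $x_T$) and the adversary. First I would exploit the zero-sum assumption: since $u_n=-u_T$, maximizing the adversary's utility is the same as minimizing the team's utility, so the adversary-best-response constraint Eq.(\ref{GCTME_minimize1}) coincides with Eq.(\ref{specialne_minimize1}). Because the two programs share the simplex constraints and the objective $u_T(x_T,x_n)$, this means the feasible region of Program (\ref{GCTMENLP}) is exactly that of Program (\ref{ne_linear}) intersected with the correlated-equilibrium constraints Eq.(\ref{GCTME_ce}). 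Hence every feasible point of Program (\ref{GCTMENLP}) is feasible for Program (\ref{ne_linear}), and the optimal value of the latter is at least that of the former.

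Next I would recall the standard minimax characterization. The constraint Eq.(\ref{specialne_minimize1}) forces $u_T(x_T,x_n)=\min_{x'_n} u_T(x_T,x'_n)$, so Program (\ref{ne_linear}) computes $\max_{x_T}\min_{x_n} u_T(x_T,x_n)$, and its optimizers $(x_T,x_n)$ are exactly the Nash equilibria of the two-player zero-sum game in which the team is a single player. This yields the ``maxmin optimizer $=$ two-player NE'' identification in both directions, together with the fact that all such equilibria attain the common game value.

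The key step is to invoke Theorem \ref{theorem_specailzerosumgame}, whose proof shows that under the consistent constraints Eq.(\ref{eqconsistentconstraint}) every optimal solution of Program (\ref{ne_linear}) already satisfies Eq.(\ref{GCTME_ce}). Thus the correlated-equilibrium constraints are redundant at the optimum: the optimal solution sets of the two programs coincide, and their optimal values are equal. Combining the three observations gives both directions. A TMCoE is an optimizer of Program (\ref{GCTMENLP}), hence an optimizer of Program (\ref{ne_linear}), hence a two-player NE; conversely a two-player NE attains the game value and is therefore an optimizer of Program (\ref{ne_linear}), and by the redundancy it satisfies Eq.(\ref{GCTME_ce}), so it is a TMCoE.

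I expect the main obstacle to be justifying the redundancy of the correlated-equilibrium constraints, i.e.\ that at a maxmin optimum each team player's individual deviation inequality holds. This is precisely where Eq.(\ref{eqconsistentconstraint}) is needed: an optimal $x_T$ satisfies the aggregate inequality for $u_T$ (no profitable pure deviation for the single ``team player''), and the consistency relation converts this aggregate inequality into the per-player inequality for each $u_i$ with the same sign. Without this property, team-level optimality would not imply individual incentive-compatibility, and the equivalence would fail, as the counterexample in Table \ref{tab:team_adv_game2} illustrates.
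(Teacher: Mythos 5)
Your proposal is correct and follows essentially the same route as the paper's proof: identify Eq.(\ref{GCTME_minimize1}) with Eq.(\ref{specialne_minimize1}) in the zero-sum case, observe that Program (\ref{GCTMENLP}) is Program (\ref{ne_linear}) with the extra constraints Eq.(\ref{GCTME_ce}), and then use Theorem \ref{theorem_specailzerosumgame} to conclude those constraints are redundant at the optimum, so the two optimizer sets coincide with the set of two-player zero-sum Nash equilibria. Your write-up is in fact slightly more explicit than the paper's on why equality of the two optimal values lets you pass from ``optimizer of Program (\ref{GCTMENLP})'' to ``optimizer of Program (\ref{ne_linear})'', but the substance is identical.
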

\begin{proof}
Eq.(\ref{GCTME_minimize1}) is equivalent to Eq.(\ref{specialne_minimize1}) in zero-sum games. Then, the feasible solution space of Eqs.(\ref{GCTME_minimize1})--(\ref{GCTME_cd}) is equivalent to the feasible solution space of Eqs.(\ref{specialne_minimize1})--(\ref{ne_2}). Program (\ref{ne_linear}) and Program (\ref{GCTMENLP}) have the same objective function, and  Program (\ref{GCTMENLP}) has additional constraints shown in Eq.(\ref{GCTME_ce}). That is, the feasible solution space in  Program (\ref{ne_linear}) includes the feasible solution space in  Program (\ref{GCTMENLP}). Program (\ref{ne_linear}) as a maxmin program computes an NE for two-player zero-sum games.  By Theorem \ref{theorem_specailzerosumgame}, each optimal solution of Program (\ref{ne_linear}) is  an optimal solution of Program (\ref{GCTMENLP}). That is, additional constraints shown in Eq.(\ref{GCTME_ce}) are redundant in   zero-sum adversarial team games with   consistent constraints shown in Eq.(\ref{eqconsistentconstraint}). Then, in these games, a TMCoE $(x_T,x_n)$ is   a two-player  Nash equilibrium (i.e., $x_T$   for the team as a single player and $x_n$   for the adversary),   and vice versa.
\end{proof}

\begin{Theorem}
In   zero-sum adversarial team games with   consistent constraints shown in Eq.(\ref{eqconsistentconstraint}),  TMCoEs are exchangeable.
\end{Theorem}
\begin{proof}
$(x_T,x_n)$ and $(x'_T,x'_n)$ are TMCoEs. By Theorem \ref{therorem_equivalent}, $(x_T,x_n)$ and $(x'_T,x'_n)$ are NEs between the team and the adversary.  Then, we have: 
\begin{equation*} 
    u_T(x_T,x_n)\leq u_T(x_T,x'_n)\leq u_T(x'_T,x'_n)   \leq u_T(x'_T,x_n)\leq u_T(x_T,x_n).
\end{equation*}
Therefore,  $ u_T(x_T,x_n)=u_T(x'_T,x'_n)=u_T(x'_T,x_n)=u_T(x_T,x'_n)$. Then $(x'_T,x_n)$ and $(x_T,x'_n)$ are NEs between the team and the adversary and then  TMCoEs in zero-sum adversarial team games with   consistent constraints shown in Eq.(\ref{eqconsistentconstraint}).
\end{proof}

 
\nop{ \begin{table}
    \centering
    \begin{tabular}{|c|c|c|c|c|c|c|c|}
        \cline{1-3}\cline{5-7} $c_1$&$b_1$&$b_2$&&$c_2$&$b_1$&$b_2$  \\\cline{1-3}\cline{5-7}
         $a_1$&0,0,0&0,1,-1&&$a_1$&0,0,0&0,1,-1\\\cline{1-3}\cline{5-7}
         $a_2$&1,0,-1&0,0,0&&$a_2$&1,0,-1&0,1,-1\\\cline{1-3}\cline{5-7}
     
    \end{tabular}
    \caption{A zero-sum adversarial team game: Player 1 with  $A_1=\{a_1,a_2\}$ and player 2 with $A_2=\{b_1,b_2\}$ form a team, player 3 with $A_3=\{c_1,c_2\}$ is the adversary.  CoEs are not exchangeable in this zero-sum game: $(a_2,b_1,c_1)$ and $(a_1,b_2,c_2)$ are both CoEs, but $(a_2,b_1,c_2)$ after being exchanged is not aCoE because $b_1$ is strictly dominated by $b_2$ when $c_2$ is played by player 3. Note that, in this game,   consistent constraints shown in Eq.(\ref{eqconsistentconstraint}) do not hold. For example,  for $(a_2,b_1,c_1)$  and $(a_2,b_2,c_1)$, the team's utilities are 1 and 0, respectively, and player 2's utilities are 0 in both cases. Then, player 2's utility is not   consistent with the team’s whole utility, i.e., there does not exist $k>0$ such that $(1-0)=k(0-0)$ when $b_1$ is recommended to player 2.  
    }
    \label{tab:team_adv_game2}
\end{table}
}
 
\begin{Theorem} 
In   zero-sum adversarial team games with   consistent constraints shown in Eq.(\ref{eqconsistentconstraint}), 
there is a polynomial-time algorithm to compute a TMCoE.
\end{Theorem}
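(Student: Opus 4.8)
The plan is to reduce the computation of a TMCoE to computing a Nash equilibrium of a two-player zero-sum game, which is well known to be solvable by linear programming in polynomial time. By Theorem \ref{therorem_equivalent}, in these games a TMCoE $(x_T,x_n)$ coincides with a Nash equilibrium of the metagame in which the whole team is treated as a single player choosing a correlated strategy $x_T\in X_T=\Delta(\mathbf{A}_T)$ and the adversary chooses $x_n\in X_n=\Delta(A_n)$. First I would observe that this metagame is itself zero-sum: in a zero-sum adversarial team game $\sum_{i\in N}u_i=0$, so $u_n=-u_T$, and the adversary maximizing $u_n$ is exactly minimizing $u_T$. Hence, by the minimax theorem, every Nash equilibrium of the metagame has the same value $u_T(x_T,x_n)$, so any such equilibrium already maximizes the team's utility among all CoEs; the ``team-maximizing'' requirement is therefore automatically satisfied once any metagame equilibrium is found.

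Next I would turn Program (\ref{ne_linear}) into an explicit linear program. As noted in the text, that program is the maxmin $\max_{x_T\in X_T}\min_{x_n\in X_n}u_T(x_T,x_n)$. Since $u_T(x_T,x_n)=\sum_{a_n\in A_n}x_n(a_n)\,u_T(x_T,a_n)$ is linear in $x_n$, the inner minimum over the simplex $X_n$ is attained at some pure action $a_n$, which is exactly the content of the best-response constraints (\ref{specialne_minimize1}). Introducing a scalar value variable $v$, the program becomes the standard linear program $\max_{x_T,v}v$ subject to $v\le u_T(x_T,a_n)$ for every $a_n\in A_n$ together with the simplex constraints (\ref{ne_1}). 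Its optimal $x_T$ is the team's equilibrium strategy, and solving the dual linear program (equivalently, reading off the multipliers of the best-response constraints by complementary slackness) yields the adversary's equilibrium strategy $x_n$, so that the recovered pair $(x_T,x_n)$ is a genuine CoE and not merely a team strategy.

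Finally I would argue the running time is polynomial in the size of the input. The normal-form representation of the game already lists a payoff for every joint action, so its description has size $\Theta(n\prod_{i\in N}|A_i|)$; the linear program above has $|\mathbf{A}_T|+1$ variables and $|A_n|+|\mathbf{A}_T|+1$ constraints, each of which is linear and polynomially bounded in the input size. Since linear programs are solvable in polynomial time (e.g., by interior-point or ellipsoid methods), the whole procedure runs in polynomial time, and by Theorem \ref{theorem_specailzerosumgame} its output is a TMCoE.

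The step I expect to be the main obstacle is the accounting that treating the team as a single meta-player with $|\mathbf{A}_T|=\prod_{i\in T}|A_i|$ correlated actions does not blow the problem up beyond the input size: this is fine only because the normal-form game is itself given with all joint-action payoffs, so $|\mathbf{A}_T|$ is polynomial in the input. A second point needing care is the recovery of the adversary's strategy $x_n$ from the linear program (via duality or complementary slackness) so that the returned object is an actual CoE pair, rather than arguing only about $x_T$; the zero-sum structure provided by Theorem \ref{therorem_equivalent} is precisely what makes both this recovery and the team-maximization automatic.
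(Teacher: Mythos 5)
Your proposal is correct and follows essentially the same route as the paper: invoke Theorem \ref{therorem_equivalent} to identify a TMCoE with a Nash equilibrium of the two-player zero-sum metagame, write Program (\ref{ne_linear}) as the maxmin $\max_{x_T}\min_{x_n}u_T(x_T,x_n)$, and solve it via the dual linear program in polynomial time. Your version is more explicit about the value-variable LP formulation, the recovery of $x_n$ by duality, and the input-size accounting for $|\mathbf{A}_T|$, but these are elaborations of the paper's (terser) argument rather than a different proof.
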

\begin{proof}
There is a polynomial-time algorithm to compute an NE in two-player zero-sum games, then we can obtain that there is a polynomial-time algorithm to compute a TMCoE by Theorem \ref{therorem_equivalent}. Formally, Program (\ref{ne_linear}) is equivalent to the  program: $\max_{x_T\in X_T}\min_{x_n\in X_n}u_T(x_T,x_n).$  
The reason is that  Eq.(\ref{specialne_minimize1}) is equivalent to $u_T(x_T,x_n)\leq u_T(x_T,x'_n)$ with $ \forall x'_n\in X_n$.  Through the dual linear program of the minimizing  problem over $x_n$, we can obtain a linear program \cite{shoham2008multiagent}. 
\end{proof}


\end{document}